\newtheorem{thm}{Theorem}
\newtheorem{cor}[thm]{Corollary}
\newtheorem{lem}[thm]{Lemma}
\theoremstyle{definition}
\theoremstyle{remark}
\newcommand{\set}[1]{\left\{#1\right\}}
\newcommand{\ints}{\ensuremath{\mathbb{Z}}\xspace}
\newcommand{\rats}{\ensuremath{\mathbb{Q}}\xspace}
\newcommand{\nnrats}{\ensuremath{\rats^{\geq 0}}\xspace}
\newcommand{\algs}{\ensuremath{\overline{\mathbb{Q}}}\xspace}
\newcommand{\nnalgs}{\ensuremath{\algs^{\geq 0}}\xspace}
\newcommand{\palgs}{\ensuremath{\algs^{>0}}\xspace}
\newcommand{\csp}{\ensuremath{\mathsf{CSP}}\xspace}
\newcommand{\ncsp}{\ensuremath{\mathsf{\#CSP}}\xspace}
\newcommand{\eval}{\textsf{\textup{Eval}}}
\newcommand{\nump}{\textsf{\textup{\#P}}\xspace}
\newcommand{\numpc}{\textsf{\#P}-complete\xspace}
\newcommand{\numph}{\textsf{\textup{\#P}}-hard\xspace}
\newcommand{\fp}{\textsf{\textup{FP}}\xspace}
\newcommand{\fpnump}{\ensuremath{\fp^\nump}\xspace}
\newcommand{\maltsev}{\textrm{Mal'tsev}\xspace}
\newcommand{\A}{\mathcal{A}}
\newcommand{\G}{\mathcal{G}}
\newcommand{\bA}{\mathbf{A}}
\newcommand{\halfpt}{\hspace{0.5pt}}
\let\oldmarginpar\marginpar
\renewcommand\marginpar[1]{\-\oldmarginpar[\raggedleft\footnotesize #1]%
{\raggedright\footnotesize #1}}
\def\fstar{\tilde f\halfpt}
\def\Rstar{\widetilde R}
\let\U=w
\newcommand{\uplabel}[1]{^{[#1]}}
\newcommand{\eqclass}[2]{[#1]\uplabel{#2}}
\newcommand{\eqrep}[2]{\bar #1\uplabel{#2}}
\title{A complexity dichotomy
for hypergraph partition functions\thanks{Partly funded by the EPSRC
grant ``The complexity of counting in constraint satisfaction problems''.
Some of the work was done while the authors were visiting
the ``Combinatorics and Statistical Mechanics'' programme of the Isaac
Newton Institute for Mathematical Sciences, University of Cambridge.}}
\author{Martin Dyer \\
School of Computing\\ University of Leeds\\
Leeds LS2~9JT, UK\and
Leslie Ann Goldberg \\
Department of Computer Science,\\
 University of Liverpool,\\
 Liverpool
L69 3BX, UK\and
Mark Jerrum \\
School of Mathematical Sciences,\\
Queen Mary, University of London\\
Mile End Road, London E1 4NS, UK}
\date{\today}
\begin{document}

\maketitle

\begin{abstract}
We consider the complexity of counting homomorphisms from an $r$-uniform
hypergraph $G$ to a symmetric $r$-ary relation $H$. We give a dichotomy theorem
for $r>2$, showing for which $H$ this problem is in \fp and for which $H$ it is \nump-complete. This generalises a theorem of Dyer and Greenhill (2000) for
the case $r=2$, which corresponds to counting graph homomorphisms. Our
dichotomy theorem extends to the case in which the relation $H$ is weighted,
and the goal is to compute the \emph{partition function}, which is the sum of
weights of the homomorphisms. This problem is motivated by statistical physics,
where it arises as computing the partition function for particle models in
which certain combinations of $r$ sites interact symmetrically. In the weighted
case, our dichotomy theorem generalises a result of Bulatov and Grohe (2005)
for graphs, where $r=2$. When $r=2$, the polynomial time cases of the dichotomy
correspond simply to rank-1 weights. Surprisingly, for all $r>2$ the polynomial time cases of the dichotomy have rather more structure. It turns out that the
weights must be superimposed on a combinatorial structure defined by solutions
of an equation over an Abelian group. Our result also gives a dichotomy
for a closely related constraint satisfaction problem.
\end{abstract}

\section{Introduction}
\label{sec:intro}
We consider the complexity of counting homomorphisms from an $r$-uniform
hypergraph $G$ to a symmetric $r$-ary relation $H$. We will give a
\emph{dichotomy} theorem for $r>2$, showing that counting is in 
polynomial time for certain $H$ and is \numpc for the remainder.
Moreover our dichotomy is \emph{effective}, meaning that there is an 
algorithm that takes $H$ as input and determines whether the counting problem is 
polynomial time solvable or whether it is \numpc. This generalises a theorem of 
Dyer and Greenhill~\cite{DyeGre00} for the case $r=2$, which corresponds to 
counting graph homomorphisms or $H$-colourings.

Our dichotomy extends to the case in which the relation $H$ is weighted,
and we wish to compute the \emph{partition function}, which is the sum of
weights of all homomorphisms.  Here our dichotomy theorem extends a result of
Bulatov and Grohe~\cite{BulGro05} for the case of graphs, $r=2$. In the graph
dichotomy, the polynomial time cases correspond simply to weights which
form rank-1 matrices. Surprisingly, for all $r>2$, the polynomial time solvable
cases are more structured. It turns out that the weights must be superimposed
on a combinatorial structure defined by solutions of an equation over an
Abelian group. We note that this already appears in a disguised form in
the case $r=2$. The bipartite case, which has no obvious analogue for $r>2$,
corresponds to the equation $\alpha_1+\alpha_2=1$ over the group $\ints_2$.

A motivation for considering this question comes from statistical physics.
Identifying $V(G)$ with a set of {\it sites\/} and $D$ with a set of
$q$~{\it spins},
the quantity that we wish to compute,
$Z^g(G)$, can be viewed as the partition function of
a statistical physics model in which certain sets of $r$~sites interact
symmetrically, and their interaction contributes to the \emph{Hamiltonian} of
the
system. The partition function then
gives the normalising constant for the
\emph{Gibbs distribution} of the system. The sets of $r$ interacting sites
are the edges
of~$G$.
(Sometimes, an edge of size greater than~$2$ is referred to as
a ``hyperedge'', but we do not use that terminology here.)
Clearly, the sites in an edge
should be distinct, although their
spins need not be.  In this application, the edges would usually represent
sets of sites which are in close physical proximity.

\subsection{Notation and definitions}
\label{subsec:defs}
An \emph{$r$-uniform hypergraph} $G$ was defined by Berge~\cite{Berge70} to be a
system of subsets of a set $V(G)$, where $n=|V(G)|$, in which each subset has
cardinality $r$. The elements of $V(G)$ are the \emph{vertices} of the hypergraph,
and the subsets are its \emph{edges}. Then $E(G)$ denotes the \emph{edge set} of $G$. Let $M=|E(G)|$.
Note that the edges of $G$ are distinct sets, otherwise the set system is a
\emph{multihypergraph}. Note also that the edges are sets, not \emph{multisets},
otherwise  the multiset system has been called a \emph{hypergraph with
multiplicities}~\cite{LanZie07}. Note that ``$r$-uniform hypergraph with
multiplicities'' is synonymous with ``symmetric $r$-ary relation''.  A \emph{loop} is
then a (multiset) edge in which all $r$ vertices are the same~\cite{LanZie07}.
Therefore a \emph{simple graph} $G=(V,E)$ (having no loops or \emph{parallel edges})
is a $2$-uniform hypergraph, a graph with parallel edges is a $2$-uniform
multihypergraph, and a graph with loops is a $2$-uniform hypergraph with
multiplicities, or a symmetric binary relation.

Let $D$ be a finite set with $q=|D|$.  We will assume $q\geq 2$, since
the cases $q\leq 1$ are trivial. For some $r\geq 3$, we consider a symmetric $r$-ary function $g$ with domain $D$ and codomain a set of real numbers. The codomain we will choose is the set of nonnegative algebraic numbers,  $\nnalgs$. Thus \algs denotes the field of all algebraic numbers, and we let \palgs denote the positive numbers in \algs. Our principal reason for this choice is that arithmetic operations and comparisons on such numbers can be carried out exactly on a Turing machine. See, for example,~\cite{Cohen93}. Moreover, since our analysis is entirely concerned with polynomial equations, it is natural to work in \algs, which is the algebraic closure of the rational field \rats.

Given a symmetric function $g:D^r\to \nnalgs$ and an $r$-uniform hypergraph $G$ as input, the partition function associated with $g$ is
\begin{equation}\label{eq:pf}
   Z^g(G) = \sum_{\sigma: V(G)\rightarrow D}\,
   \prod_{(u_1,\ldots,u_r)\in E(G)} g(\sigma(u_1),\ldots,\sigma(u_r)).
\end{equation}
$\eval(g)$ is the problem of computing $Z^g(G)$, given the input~$G$.
Each choice for the function~$g$ leads to a computational problem which
we will call $\eval(g)$, and we may ask how the computational
complexity of $\eval(g)$ varies with~$g$.

We may view \eqref{eq:pf} as the evaluation of a multivariate polynomial function of the \emph{weights} $g(x)$ ($x\in D^r$). If there are $N$ different irrational weights $\xi_1,\xi_2,\ldots\xi_N$, we can perform the necessary computations in the field $\rats(\xi_1,\xi_2,\ldots\xi_N)$. It is known that this field is equivalent to $\rats(\theta)$ for a single algebraic number $\theta$, the primitive element, and
an algorithm to determine $\theta$ exists. We do not need to consider the
efficiency of this algorithm, since $N$ is a constant. The standard representation of
a number in $\rats(\theta)$ is a constant degree polynomial in $\theta$ with
rational coefficients. Arithmetic operations in $\rats(\theta)$ can be carried out in
this representation.  For details, see~\cite{Cohen93}. We assume that $g$ is
pre-processed so that all weights are given in this standard representation. Some
of our intermediate reductions seemingly require computing in larger algebraic number
fields. This is true even if all original weights are rational, and justifies our choice of \algs as the codomain of $g$. We will suppose, without further comment, that the necessary algebraic numbers are adjoined to $\rats(\theta)$ as required.
In any case, we compute only in numbers fields which have constant degree
over \rats. Despite this increase in field size during our reductions, we will show that the resulting algorithm for the polynomial time solvable cases can perform its computations entirely within $\rats(\theta)$.  Note that the exact representation in $\rats(\theta)$ can also be used to compute in \fp any polynomial number of bits of the binary expansion of $Z^g(G)$, if this is required.

It is easy to bound the number of different monomials which occur in~\eqref{eq:pf}.
Suppose there are $K$ nonzero weights, for some $0\leq K \leq \binom{q}{r}$.
Then the polynomial~\eqref{eq:pf} has at most
\[\binom{M + K-1}{K-1}\ =\ O(M^{K-1})\]
monomial terms, which is polynomial in the size of the input. Each  monomial can be computed exactly in \fp, working in the field $\rats(\theta)$. The coefficient
of each monomial is an integer, which is easily seen to be computable in \nump. The nondeterministic Turing machine guesses $\sigma:V(G)\to D$, computes the term in \eqref{eq:pf} as a monomial in the weights and accepts if it is the chosen monomial. Therefore $Z^g(G)$ can be computed exactly in \fpnump as an element of $\rats(\theta)$. Consequently, showing that $Z^g(G)$ is \numph  implies that it is complete for \fpnump. We make use of this observation below.

It will be helpful to
describe a \emph{constraint satisfaction} problem
which is closely related to $\eval(g)$.
An instance~$I$ of $\ncsp(g)$ consists of
a set $V(I)=\{v_1,\ldots,v_n\}$ of \emph{variables}
and a multiset $E(I)$ of \emph{constraints}.
Each constraint has a \emph{scope}, $(u_{1},\ldots,u_{r})$,
which is a tuple of $r$ variables.
The partition function $Z^g(I)$
is given by
\begin{equation}\label{eq:pfcsp}
   Z^g(I) = \sum_{\sigma: V(I)\rightarrow D}\,
   \prod_{(u_1,\ldots,u_r)\in E(I)} g(\sigma(u_1),\ldots,\sigma(u_r)).
\end{equation}

Thus, every instance $G$ of $\eval(g)$ can be viewed as an instance of
$\ncsp(g)$
by taking the vertices as variables and the edges as constraint scopes. The
value of the partition function that gets output is the same
in both cases.
Thus, we have a trivial polynomial time reduction from
$\eval(g)$ to $\ncsp(g)$.
The opposite is not necessarily true, because a constraint scope
$(u_1,\ldots,u_r)$
of an instance $I$ of $\ncsp(g)$ might not be an edge -- the same
variable might appear more than once amongst $u_1,\ldots,u_r$.
Also, the same scope might appear more than once in~$E(I)$.
So an instance $I$ of $\ncsp(g)$ might not be a properly-formed
instance of $\eval(g)$. In fact, $I$ is a multihypergraph with
multiplicities in general, rather than a hypergraph.
Nevertheless, our main result applies also to the
problem $\ncsp(g)$ --- see Corollary~\ref{cor:free}.
We note that both the $\eval(g)$ and the $\ncsp(g)$ problems have been studied
extensively.

The problem $\ncsp(g)$ may be generalised to the case in which
the  parameter $g$ is replaced by a set of functions $\Gamma$.
If $\Gamma$ is a set of functions (of various arities) from~$D$
to \nnalgs, then $\ncsp(\Gamma)$ is the problem of computing the
partition function of an instance~$I$ in which each constraint
with $r$-ary scope specifies a particular $r$-ary function
from~$\Gamma$ which should be applied to the scope in the
partition function. See~\cite{BulGro05} or \cite{DyGoJe07} for
further details. If the functions in $\Gamma$ are not required
to have any additional properties, like symmetry or given arity,
$\ncsp(\Gamma)$ is actually no more general than $\ncsp(g)$, at
least from the viewpoint of computational complexity. It can be
shown that the two problems have the same complexity under
polynomial time reductions~\cite{Bulvisit}. Note, however, that
the reduction from $\ncsp(\Gamma)$ to $\ncsp(g)$ given
in~\cite{Bulvisit} does not preserve symmetry. So this equivalence does not
permit us to replace a family $\Gamma$ of \emph{symmetric}
functions by a single symmetric function $g$. This holds even
in the simplest possible case
in which $\Gamma$ has
two unary functions. Hence,
restricted to symmetric functions, $\ncsp(\Gamma)$ may be a more
general problem than $\ncsp(g)$, but we do not consider it
further here.

\subsection{Previous work}
\label{subsec:previous}
The computational complexity of problems of the type we consider here
was first investigated
by Dyer and Greenhill~\cite{DyeGre00}, who examined the
complexity of $\eval(g)$ in the
special case in which $r=2$ and $g:D^{2}\to\{0,1\}$, so $g$ is
equivalent to a symmetric relation on~$D$. This is the problem
of counting homomorphisms from an input simple graph~$G$ to a fixed
(undirected) graph~$H$, possibly with loops, where the function~$g$
represents the adjacency matrix of~$H$. They showed that there is a
polynomial time algorithm when each connected component of~$H$ is either
a complete unlooped bipartite graph or a complete looped graph.
In all other cases the counting problem $\eval(g)$ is \nump-complete.

More generally, Bulatov and Grohe~\cite{BulGro05}
considered the complexity of $\ncsp(g)$ when
$g$ is a symmetric binary function on $D$.
If the input is a simple graph $G$, we can think of this
as counting weighted homomorphisms from $G$ to an
undirected graph~$H$ with nonnegative edge weights.
The function $g$ is equivalent to the weighted
adjacency matrix $\bA$ of $H$.
If $H$ is connected, then we say that the matrix~$\bA$ is ``connected'', otherwise
the ``connected components'' of~$\bA$ correspond to the connected components
of the graph~$H$. Similarly, we say that $\bA$ is bipartite if and only if $H$ is bipartite.
In this setting,
Bulatov and Grohe~\cite{BulGro05} established the following
important theorem, which is central to our analysis.
\begin{thm}[Bulatov and Grohe]\label{BulGro}
Let $\bA$ be a symmetric matrix with non-negative real entries.
\begin{enumerate}
  \item[(1)] If $\bA$ is connected and not bipartite, then $\eval(\bA)$ is in polynomial time if the row rank of $\bA$ is at most 1; otherwise \eval($\bA$) is \numph.
  \item[(2)] If $\bA$ is connected and bipartite, then $\eval(\bA)$ is in polynomial time if the row rank of $\bA$ is at most 2; otherwise $\eval(\bA)$ is \numph.
  \item[(3)] If $\bA$ is not connected, then $\eval(\bA)$ is in polynomial time if each of its connected components satisfies the
corresponding condition stated in (1) or (2); otherwise $\eval(\bA)$ is \numph.
\end{enumerate}
\end{thm}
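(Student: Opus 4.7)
The plan is to handle the tractable direction first, and then tackle the hardness direction, which is where almost all the work lies.

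For tractability, in case~(1) if $\bA$ is connected, not bipartite, and has row rank~$1$, then by Perron--Frobenius (and the symmetry and non-negativity of~$\bA$) one may write $\bA = c\,\mathbf{v}\mathbf{v}^{T}$ for a non-negative vector~$\mathbf{v}$ and scalar~$c>0$. Substituting into \eqref{eq:pf} and collecting by vertex gives
\begin{equation*}
   Z^{\bA}(G) \;=\; c^{|E(G)|}\sum_{\sigma:V(G)\to D}\,\prod_{u\in V(G)} \mathbf{v}(\sigma(u))^{\deg_G(u)} \;=\; c^{|E(G)|}\prod_{u\in V(G)}\Bigl(\sum_{i\in D} \mathbf{v}(i)^{\deg_G(u)}\Bigr),
\end{equation*}
which is computable in \fp. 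For case~(2), a bipartite connected $\bA$ of row rank at most~$2$ has, after permuting rows and columns, the block antidiagonal form with a single rank-$1$ off-diagonal block, and $Z^{\bA}(G)$ then splits as a sum over the (at most~$2$) valid bipartitions of each connected component of~$G$, within each of which it again factors across vertices. Case~(3) reduces to the others: components of~$\bA$ block-diagonalise, so $Z^{\bA}(G)$ becomes a sum over assignments of each connected component of~$G$ to a component of~$\bA$.

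For hardness, the natural route is to reduce from the unweighted Dyer--Greenhill dichotomy~\cite{DyeGre00}. The high-level strategy has three ingredients. First, develop a \emph{pinning} reduction: given a subset $S\subseteq D$ that is closed under the combinatorial structure of~$\bA$, construct a polynomial-time Turing reduction from the problem of computing $Z^{\bA_S}$ (the partition function of the principal submatrix indexed by~$S$) to $\eval(\bA)$. The standard device is to attach, at a designated vertex, a ``thickened'' gadget whose effect is to multiply the contribution of each value $i\in D$ by some function $\phi(i)$ that vanishes (or becomes negligible) outside~$S$; by taking polynomially many distinct gadgets, one obtains a linear system whose solution recovers any desired principal-submatrix partition function by interpolation.

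Second, assuming pinning, show that if $\bA$ is connected and non-bipartite of rank at least~$2$, then some $2\times 2$ principal submatrix $\bA_S$ is itself not of rank~$1$, i.e.\ $\bA_S(1,1)\bA_S(2,2) \neq \bA_S(1,2)^{2}$, and moreover is connected and non-bipartite as a $2$-element weighted graph. One now reduces a \numph\ unweighted $2$-spin counting problem (e.g.\ counting independent sets, which corresponds to the matrix with rows $[0,1]$ and $[1,1]$ and is \numpc by \cite{DyeGre00}) to $\eval(\bA_S)$: because $\bA_S$ has strictly positive entries, raising~$\bA_S$ to high gadget-controlled ``powers'' via thickenings allows one to drive the ratio $\bA_S(1,1)^{a}\bA_S(2,2)^{b}/\bA_S(1,2)^{a+b}$ to approximate any desired rational weighting, and then rounding (aided by the algebraic-number representation discussed above) extracts exact answers. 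An analogous argument in the bipartite case, starting from a $3\times 3$ principal submatrix of rank at least~$3$ or a rank-$2$ block that is not rank~$1$, gives hardness when $\bA$ is bipartite of rank at least~$3$.

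The main obstacle is unambiguously the pinning step. Pinning a single vertex to a single value with exact gadgetry is impossible in general, so one must instead pin to a subset, show that this suffices to isolate any hard principal submatrix, and carefully control the arithmetic: the gadgets produce weights in an algebraic extension of $\rats$, the reduction uses polynomial interpolation in that field, and one must verify that the intermediate linear systems are non-singular (which typically reduces to a Vandermonde-type argument using distinct eigenvalues of a suitable auxiliary matrix built from~$\bA$). Once pinning is in place, the spectral/graph-theoretic case analysis is essentially routine, but the pinning reduction itself is the technical heart of the Bulatov--Grohe proof and requires the matrix to be non-negative (so that all eigenvalue manipulations remain in $\nnalgs$) and connected in the relevant sense.
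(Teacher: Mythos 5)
This theorem is not proved in the paper at all: it is Theorem~1, stated and explicitly attributed to Bulatov and Grohe~\cite{BulGro05}, and it is used as a black box throughout (see the base case in Lemma~\ref{lem:base} and the use of Lemma~\ref{lem:twostretch}). There is therefore no ``paper's own proof'' to compare your proposal against; the authors simply cite~\cite{BulGro05} and remark that they only need the theorem over algebraic numbers so that all arithmetic is exact in the Turing-machine model. A proof was not expected here, and supplying one was unnecessary for reading the paper.

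As a free-standing sketch of the Bulatov--Grohe argument, your tractability direction is essentially right (rank~$1$ gives $\bA=\mathbf v\mathbf v^{T}$ and the partition function factorises vertex by vertex; the bipartite rank-$\le 2$ case and the disconnected case reduce to it). Your hardness outline captures the correct high-level ingredients (thickening/stretching gadgets, interpolation to isolate a principal submatrix, reduction to a $2\times 2$ hard case), but one step as you describe it would not go through: you propose to ``drive the ratio $\bA_S(1,1)^a\bA_S(2,2)^b/\bA_S(1,2)^{a+b}$ to approximate any desired rational weighting, and then round.'' The actual Bulatov--Grohe (and Dyer--Greenhill) machinery is \emph{exact}: one evaluates the partition function on a polynomial family of gadget-modified instances and solves a nonsingular (Vandermonde-type) linear system, rather than approximating a target ratio and rounding. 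An approximation-and-round scheme would need a separate argument that the error can be made smaller than the gap between candidate integer answers, and in the weighted setting with arbitrary non-negative reals (or algebraics) that gap is not controlled. You correctly identify pinning/interpolation as the technical heart, but the mechanism should be exact interpolation, not numerical approximation.
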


Although Theorem~\ref{BulGro} is stated for real numbers, we will make
use of it only in the case of the algebraic numbers, since it is not
clear to us how it extends to the models of real computation discussed in~\cite{BulGro05}.
We prefer to work entirely in the standard Turing machine
model of computation, though there may well be models of real computation in
which Theorem~\ref{BulGro} is valid.
For algebraic numbers, which include the
rationals, all the arithmetic operations and comparisons required in
our reductions, and those of~\cite{BulGro05}, can be carried out exactly in
the Turing machine model.

In the unweighted case of $\ncsp(\Gamma)$, where all functions in $\Gamma$ have
codomain $\{0,1\}$, Bulatov~\cite{Bulato08} has recently shown that there is a
dichotomy between those $\Gamma$ for which $\ncsp(\Gamma)$ is polynomial time
solvable, and those for which it is \nump-complete. The dichotomy can be extended to
the case in which all functions in $\Gamma$ have codomain \nnrats, the nonnegative
rational numbers, using polynomial time reductions~\cite{Bulvisit}.
However, the reductions involved do not seem to
extend to functions with codomain \nnalgs.

Establishing the existence of a dichotomy for $\ncsp(\Gamma)$ is
a major breakthrough. Nevertheless, the techniques of~\cite{Bulato08}
shed very little light on which $\Gamma$ render $\ncsp(\Gamma)$ polynomial time solvable, and which $\Gamma$ render it \numph.
In the current state
of knowledge, Bulatov's dichotomy~\cite{Bulato08} is not effective,
and its decidability is an open question.

\subsection{The new results}
Our main theorem, Theorem~\ref{thm:state}, gives a dichotomy
for the case in which $\Gamma$ contains a single
symmetric function~$g$.  For this problem, we
identify a set of functions~$g$ for which $\eval(g)$ is
computable in \fp, and we show that, for every other function~$g$,
$\eval(g)$ is complete for \fpnump.

We examine both $\eval(g)$ and $\ncsp(g)$ in this setting,
and give an explicit dichotomy theorem in both cases, extending
the theorems of Dyer and Greenhill~\cite{DyeGre00}
and Bulatov and Grohe~\cite{BulGro05} to $r>2$.
In the $r>2$ case, the problem $\eval(g)$
can be understood as evaluating sums of weighted
homomorphisms from an input hypergraph~$G$
to a fixed weighted hypergraph with multiplicities~$H$. The weights of
edges in $H$ are represented by the function $g$.

As in the $r=2$ case, there
is a dichotomy, but this time some nontrivial algebraic structure
is involved in the classification.  The polynomial time solvable
cases have rank-1 weights as before, but this time, these
weights are superimposed
on a combinatorial structure defined by solutions to an equation
over an Abelian group.
In particular, $\eval(g)$ is polynomial time solvable
if and only if
each connected piece of the domain factors
as the cartesian product of two sets $A$ and $[s]$.
Then, for any
 $\alpha_1,\ldots,\alpha_r \in A$ and
$i_1,\ldots,i_r\in [s]$,
the value of
$g( (\alpha_1,i_1),\ldots,(\alpha_r,i_r) )$
is equal to~$0$ unless
$(\alpha_1,\ldots,\alpha_r)$ is a solution to an equation in an Abelian
group with domain~$A$. In that case,
the value
$g( (\alpha_1,i_1),\ldots,(\alpha_r,i_r) )$
is just the product of some positive
weights $\lambda_{i_1},\ldots,\lambda_{i_r}$.
A ``connected piece'' of the domain is defined as follows:  two elements $z$ and $z'$
are linked if there are some $z_2, \ldots, z_{r-1}$ such that
$g(z, z_2, ..., z_{r-1}, z') >0$. In general, two elements $z$ and $z'$
are 
connected if there is a sequence of $c$ elements $z_1,\ldots,z_c$ with
$z=z_1$ and $z'=z_c$ such that each pair $(z_i,z_{i+1})$ is linked.
See Theorem~\ref{thm:state} for details.

In fact, it turns out that there is only
one way to factor the connected component of the domain into~$A$ and~$[s]$
(see Theorem~\ref{thm:main}). Thus, there is a straightforward
algorithm that takes~$g$ and determines
whether $\eval(g)$ is in \fp or is \nump-hard.
See Sections~\ref{sec:final} and~\ref{sec:modelcomp}.

Our result is in a similar spirit to
the result of Kl\'\i ma, Larose and Tesson~\cite{KlLaTe06}
which gives a dichotomy for the problem of counting the number of solutions
to a system of equations over a fixed semigroup. Although our application is
rather different, parts of our proof draw inspiration from the proof of
their theorem.

\section{The main theorem}
\label{sec:main}
For $1\leq k \leq r$, we will
define
\[f\uplabel{k}(z_1,\ldots,z_k) = \sum_{z_{k+1},\ldots,z_r\in D}
g(z_1,\ldots,z_r).\]
Note that $f\uplabel{k}$ is symmetric and that
$f\uplabel{r}(z_1,\ldots,z_r) = g(z_1,\ldots,z_r)$.
Let
\[R\uplabel{k}=\{(z_{1},\ldots,z_{k}):f\uplabel k(z_{1},\ldots,z_{k})>0\}\]
be the relation underlying~$f\uplabel{k}$.
We will view relations either as subsets of~$D^{k}$
or as functions $D^{k}\to\{0,1\}$ according to convenience.
To avoid trivialities, we assume that $R\uplabel{1}$ is the complete
relation, i.e., that all elements of $D$ participate in the
relation;  if not, an equivalent problem can be formed by simply removing the
non-participating elements from~$D$.
For any $k<r$ we have
$f\uplabel{k}(z_1,\ldots,z_k) =
\sum_{z_{k+1}\in D}f\uplabel{k+1}(z_1,\ldots,z_{k+1})$
so if $k\geq 2$ then
$(z_1,z_2)\in R\uplabel{2}$ is equivalent to
``there exist $z_3,\ldots,z_k$ such that $(z_1,\ldots,z_k)\in R\uplabel{k}$''.
Let $\equiv$ be the equivalence relation which is the
transitive, reflexive closure of~$R\uplabel{2}$.
The domain $D$ is partitioned into
equivalence classes (``connected components'')
$D=D_{1}\cup\cdots\cup D_{m}$ by~$\equiv$.

We will use the following notation: We will
let $\ell$ range over $[m]$, and use it
to refer to a particular connected component~$D_\ell$.  When
applied to any function
as a subscript, it denotes the
restriction of that function to the relevant connected component.
For example,
 $f\uplabel{k}_{\ell}:(D_{\ell})^{k}\to\nnalgs$ denotes the
restriction of $f\uplabel{k}$ to the $\ell$th  connected
component~$D_{\ell}$. Likewise, $g_{\ell}$ is the restriction of~$g$
to~$D_{\ell}$.
Given the definition of $\equiv$, it is clear that
 $f\uplabel{k}=f\uplabel{k}_{1}\oplus\cdots\oplus f\uplabel{k}_{m}$ (meaning
 that
$f\uplabel{k}(z_1,\ldots,z_k)=0$ unless $z_1,\ldots,z_k$ are all in the same
connected component).
We can now state the main theorem.

\begin{thm}
\label{thm:state}
Let $g:D^r \rightarrow \nnalgs$ be
a symmetric function
with arity $r\geq 3$ and connected components $D_{1},\ldots,D_{m}$
as above.  If $g$ satisfies the following conditions,
for all $\ell\in[m]$, then $\eval(g)$
is in \fp. Otherwise, $\eval(g)$ is complete for \fpnump.
Moreover, the dichotomy is effective.
\begin{itemize}
\item There is a
set $A_\ell$ and a
positive integer $s_\ell$,
such that $D_{\ell}$ is the
Cartesian product of $A_\ell$ and $[s_\ell]$ (which we
write as
$D_{\ell}\cong A_{\ell}\times[s_{\ell}]$).
\item There are positive constants
$\{\lambda_{\ell,i}:i\in[s_{\ell}]\}$ and a relation
$S_{\ell}\subseteq A_{\ell}$ such that,
for $\alpha_1,\ldots,\alpha_r \in A_\ell$ and
$i_1,\ldots,i_r\in [s_\ell]$,
\begin{equation*}
g_{\ell}( (\alpha_1,i_1),\ldots,(\alpha_r,i_r) ) =
   \lambda_{\ell,i_1}\cdots\lambda_{\ell,i_r}
S_\ell(\alpha_1,\ldots,\alpha_r).
\end{equation*}
\item There is an Abelian group $(A_\ell,+)$
and an equation $\alpha_1 + \cdots +\alpha_r = a$ (for some element
$a\in A_\ell$) which defines $S_\ell$ in the
sense that
$(\alpha_1,\ldots,\alpha_r) \in S_\ell$
if and only if $\alpha_1 + \cdots +\alpha_r = a$.
\end{itemize}
\end{thm}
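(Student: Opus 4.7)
The plan is to prove the tractable direction by direct computation, and to prove \fpnump-completeness in stages that progressively extract the claimed structure, each stage appealing either to Theorem~\ref{BulGro} or to a gadget reduction.

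For the tractable direction, observe that $g=g_{1}\oplus\cdots\oplus g_{m}$, so a nonzero term of $Z^{g}(G)$ requires each connected component of $G$ to map into a single $D_{\ell}$. The computation therefore reduces to evaluating $Z^{g_{\ell}}$ on each connected component of $G$ and summing over choices of~$\ell$. Within a fixed $D_{\ell}\cong A_{\ell}\times[s_{\ell}]$ I would split $\sigma(v)=(\tau(v),\rho(v))$; the product form of $g_{\ell}$ then factorises as
\[
  Z^{g_{\ell}}(G)=\biggl(\prod_{v\in V(G)}\sum_{i=1}^{s_{\ell}}\lambda_{\ell,i}^{d(v)}\biggr)\,N_{\ell}(G),
\]
where $d(v)$ is the number of scope-incidences of $v$ and $N_{\ell}(G)$ counts the maps $\tau\colon V(G)\to A_{\ell}$ satisfying the Abelian-group equation $\tau(u_{1})+\cdots+\tau(u_{r})=a$ on every edge. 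Counting solutions to a system of linear equations over a finite Abelian group is in \fp (e.g.\ via the Smith normal form of the coefficient matrix), and all arithmetic remains inside $\rats(\theta)$.

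For the hardness direction, I would proceed in several stages. First, a gadget reduces $\eval(f^{[2]})$ to $\eval(g)$: given a simple graph $H$, replace each edge $\{u,v\}$ of $H$ by an $r$-edge $\{u,v,w_{e,1},\ldots,w_{e,r-2}\}$ in which the $w_{e,j}$ are fresh degree-one vertices, so that summing over the fresh vertices produces exactly a factor $f^{[2]}(\sigma(u),\sigma(v))$ per edge of $H$. Hence Theorem~\ref{BulGro} forces each $f^{[2]}_{\ell}$ to be rank~$1$ on non-bipartite components, with a similar asymmetric gadget handling the bipartite rank-$2$ alternative. Writing $f^{[2]}_{\ell}(x,y)=\mu_{x}\mu_{y}$, I would then partition $D_{\ell}$ by the level sets of $\mu$ to obtain the blocks that will play the role of $\{\alpha\}\times[s_{\ell}]$ for $\alpha\in A_{\ell}$. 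Three further structural facts must then be proved, each under pain of a further reduction back to the binary hard case: (i) all blocks have the same cardinality $s_{\ell}$; (ii) on any $r$-tuple of blocks $B_{\alpha_{1}}\times\cdots\times B_{\alpha_{r}}$, the function $g$ is either identically zero or is a rank-$1$ tensor with entries $\lambda_{\ell,i_{1}}\cdots\lambda_{\ell,i_{r}}$; and (iii) the support $S_{\ell}\subseteq A_{\ell}^{r}$ is the solution set of a single equation $\alpha_{1}+\cdots+\alpha_{r}=a$ in some Abelian group on~$A_{\ell}$. Steps~(i) and~(ii) combine pinning gadgets (fresh cliques that force specified block assignments) with rank analysis of the higher-order binary marginals obtained by applying the Stage-$1$ gadget to $f^{[k]}$ for $k\geq 3$.

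I expect step~(iii) to be the main obstacle. Modelled on Kl\'\i ma, Larose and Tesson~\cite{KlLaTe06}, it requires showing that the closure of $S_{\ell}$ under $\ncsp$-definable manipulations carries the algebraic signature of an Abelian group equation, and that any deviation exposes a non-Abelian substructure whose counting problem is \numph; this hardness is then threaded back through the gadget reductions to $\eval(g)$. Effectiveness is comparatively routine: once $f^{[2]}_{\ell}$ has been analysed, the candidate $A_{\ell}$, $s_{\ell}$, and weights $\lambda_{\ell,i}$ are determined uniquely (up to relabelling), and testing whether $S_{\ell}$ is the solution set of an Abelian group equation on~$A_{\ell}$ is a constant-size algebraic check.
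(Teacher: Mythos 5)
The tractable direction of your plan matches the paper's: split $\sigma=(\tau,\rho)$, factor out the vertex weights, and count solutions to the Abelian group system (the paper cites Kl\'\i ma--Larose--Tesson~\cite{KlLaTe06} but Smith normal form is fine). The padding gadget $\eval(f\uplabel{2})\leq\eval(g)$ and the invocation of Bulatov--Grohe to force rank-1 components are also how the paper opens the hardness argument (and you do not need the bipartite case: since $r\geq 3$, $R\uplabel2$ has no bipartite components, as a padded edge $(z_1,z_2,z_3,\ldots)$ in $R\uplabel{r}$ immediately gives a triangle in $R\uplabel2$).

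The partition step is where you go wrong. You propose to take the blocks $\{\alpha\}\times[s_\ell]$ to be the level sets of $\mu$, where $f\uplabel2_\ell(x,y)=\mu_x\mu_y$. But under the target factorisation, for $z=(\alpha,i)$ we have $\mu_z\propto\lambda_{\ell,i}$, which is constant across $\alpha$ and varies with $i$. So the level sets of $\mu$ are the \emph{cross-sections} $A_\ell\times\{i\}$ (grouped when $\lambda$'s coincide), not the blocks $\{\alpha\}\times[s_\ell]$ --- your proposed partition is orthogonal to the correct one. In the paper the blocks are the equivalence classes of a finer relation $\sim_k$ (two elements are equivalent if the slices $f\uplabel{k}(z_1,\cdot,\ldots,\cdot)$ and $f\uplabel{k}(z'_1,\cdot,\ldots,\cdot)$ are proportional), and the argument is an induction on $k=2,\ldots,r$: at $k=2$ the classes are trivial ($|A\uplabel2_\ell|=1$), and the nontrivial partition into blocks of size $s_\ell$ only emerges at the top level $k=r$. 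Lemma~\ref{lem:simple} shows that nontrivial group structure cannot appear at any intermediate $k<r$, which is what makes the inductive step go through; your plan has no analogue of this.

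Your step~(iii) is also where the real work lies, and the proposal leaves it essentially unaddressed: ``the closure of $S_\ell$ under \ncsp-definable manipulations carries the algebraic signature of an Abelian group equation'' names the goal, not a mechanism. The paper's mechanism is concrete: first establish that each $S\uplabel{k}_\ell$ is a \emph{Latin hypercube} (via the Cauchy--Schwarz/Gram-matrix rank-1 argument), then reduce $\ncsp(S)\leq\eval(S)$ by an interpolation gadget (Lemma~\ref{lem:anotherboringinterpolation}), so that the Bulatov--Dalmau theorem~\cite{BulDal07} yields a \maltsev polymorphism $\varphi$ preserving $S$; one then shows $\varphi(\alpha,\beta,\gamma)=\alpha-\beta+\gamma$ for an Abelian group on $A_\ell$, forcing the desired equation. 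Without the \maltsev polymorphism, or some equally explicit handle, the derivation of the group is a gap. Finally, be careful with ``pinning gadgets (fresh cliques)'': in $\eval(g)$ the instance must be a hypergraph (edges are sets of $r$ \emph{distinct} vertices, no repeated edges), and the paper deliberately avoids pinning, using interpolation over powers of a constant (Corollary~\ref{cor:interp}) instead.
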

The algorithm used in the polynomial time solvable cases
of Theorem~\ref{thm:state} still works if the instance is
a \csp instance rather than a hypergraph.
Thus, we have the following corollary.

\begin{cor}
\label{cor:free}
Let $g:D^r \rightarrow \nnalgs$ be
a symmetric function
with arity $r\geq 3$ and connected components $D_{1},\ldots,D_{m}$
as above.  If $g$ satisfies the conditions in Theorem~\ref{thm:state}
for all $\ell\in[m]$, then $\ncsp(g)$
is in \fp. Otherwise,
$\ncsp(g)$ is
complete for \fpnump.
Moreover, the dichotomy is effective.
\end{cor}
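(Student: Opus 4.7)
The plan is to derive the corollary from Theorem~\ref{thm:state} by treating the two directions separately. For the hard direction, suppose $g$ fails the stated conditions, so that $\eval(g)$ is complete for \fpnump by Theorem~\ref{thm:state}. Every hypergraph instance of $\eval(g)$ is automatically a well-formed CSP instance, so the identity map is a polynomial-time reduction from $\eval(g)$ to $\ncsp(g)$, showing that $\ncsp(g)$ is \numph; since $\ncsp(g)$ lies in \fpnump by the monomial-counting argument in Section~\ref{subsec:defs}, $\ncsp(g)$ is complete for \fpnump.

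For the easy direction, suppose $g$ does satisfy the conditions; I need to show that the polynomial-time algorithm behind Theorem~\ref{thm:state} extends to CSP instances, which differ from hypergraph instances only in permitting repeated scopes and in allowing a variable to occupy more than one position of a scope. Given an instance $I$, the fact that $g=g_{1}\oplus\cdots\oplus g_{m}$ across connected components forces any two variables sharing a scope to map into the same $D_\ell$, so I partition $V(I)$ into classes under the ``shares a scope'' relation and sum, for each class, over the choice of component $D_\ell\cong A_\ell\times[s_\ell]$ into which all of its variables are mapped (isolated variables contribute a factor~$|D|$). Writing each variable's image as $(\alpha_v,i_v)$ and using the product structure $g_\ell((\alpha_1,i_1),\ldots,(\alpha_r,i_r)) = \lambda_{\ell,i_1}\cdots\lambda_{\ell,i_r}\,S_\ell(\alpha_1,\ldots,\alpha_r)$, the contribution of a class factors completely into an ``$i$-part'' and an ``$\alpha$-part''.

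The $i$-part equals $\prod_{v}\bigl(\sum_{i\in[s_\ell]}\lambda_{\ell,i}^{d_v}\bigr)$, where $d_v$ is the total number of positions (summed over all scopes, counting multiplicity) occupied by~$v$, and is trivially in \fp. The $\alpha$-part counts tuples $(\alpha_v)$ satisfying, for each scope~$c$, the linear equation $\sum_{v} m_{c,v}\,\alpha_v = a_\ell$ over the finite Abelian group $(A_\ell,+)$, where $m_{c,v}$ is the number of positions of~$c$ occupied by~$v$; repeated scopes simply duplicate equations and repeated variables within a scope yield coefficients greater than~$1$, but in every case the task is counting solutions of a linear system over a finite Abelian group, which is polynomial-time solvable (for example, via Smith normal form). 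Multiplying the $i$- and $\alpha$-parts and summing over the component choice for each class assembles $Z^g(I)$, and effectiveness of the dichotomy is inherited from Theorem~\ref{thm:state}. The only point that needs attention is that repeated variables inside a scope translate cleanly into integer coefficients in the Abelian-group equation, which they do, so no further hypothesis on~$g$ is required.
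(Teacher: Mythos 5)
Your proof is correct and takes essentially the same approach as the paper, which dispatches the hard direction via the trivial reduction $\eval(g)\le\ncsp(g)$ and the easy direction with a one-line remark that the polynomial-time algorithm still works on $\csp$ instances. You simply fill in the details the paper omits, rightly observing that repeated variables within a scope turn the group constraint $\alpha_1+\cdots+\alpha_r=a$ into one with integer coefficients, which is still a linear system over a finite Abelian group whose solutions can be counted in polynomial time.
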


Some of the \nump-hardness proofs in the proof of Theorem~\ref{thm:state}
could be simplified if we allowed ourselves a general CSP instance
rather than a hypergraph, but we   refrain from using this simplification
in order to obtain the strongest-possible result
(that is, to obtain Theorem~\ref{thm:state} rather than just
Corollary~\ref{cor:free}).

\section{A restatement of the main theorem}
We introduce some further notation and restate the main theorem
more compactly.  Along the way we gather more information,
e.g., about the factorization $D_{\ell}\cong A_{\ell}\times[s_{\ell}]$.

We define the equivalence relation $\sim_k$ on $D$
as follows: $z_1 \sim_k z'_1$ iff
there is a $\lambda$ in $\palgs$ such that, for
all $z_2,\ldots,z_k\in D$,
$f\uplabel{k}(z_1,z_2,\ldots,z_k) = \lambda
f\uplabel{k}(z'_1,z_2,\ldots,z_k)$.
Note that $\sim_k$ refines $\sim_{k-1}$. Also, $\sim_{2}$
refines~$\equiv$
since,
for any $z_{1},z_{1}'\in D$, $z_{1}\sim_{2}z'_{1}$ implies that
there exists $z_{2}$ satisfying $R\uplabel2(z_{1},z_{2})$ and
$R\uplabel2(z_{1}',z_{2})$, which in turn implies $z_{1}\equiv z_{1}'$.

Let $\eqclass{x}{k}=\set{y:y\sim_k x}$ be the equivalence class of~$x$
under~$\sim_k$. Choose a unique representative
$\eqrep{x}{k}\in \eqclass{x}{k}$.
Thus $\eqrep{x}{k}=\eqrep{y}{k}$ if and only if
$x\sim_k y$.
Let $A\uplabel{k}=\set{\eqrep{x}{k} :
x\in D}$.
Let $A\uplabel{k}_\ell$ denote the restriction of~$A\uplabel{k}$ to $D_\ell$
so $A\uplabel{k}_\ell=\set{\eqrep{x}{k}:
x\in D_\ell}$.

Note that $R\uplabel{k}$ is consistent with $\sim_k$
in the sense that $R\uplabel{k}(z_1,\ldots,z_k)=R\uplabel{k}(
\eqrep{z_1}{k},\ldots,\eqrep{z_k}{k})$,
so we can
quotient $R\uplabel{k}$ by~$\sim_k$ to get
a relation $S\uplabel{k}=R\uplabel{k}/{\sim_k}$ on~$A\uplabel{k}$.
Note that $S\uplabel{k}$ is
just the restriction of~$R\uplabel{k}$ to~$A\uplabel{k}$.
Also, $S\uplabel{k}_\ell$ is the restriction
of $R_\ell\uplabel{k}$ to $A_\ell\uplabel{k}$.

Suppose $k$ is in the range $2\leq k \leq r$.
We say that $g$ is
\emph{$k$-factoring} if the following conditions hold for every $\ell\in[m]$.
\begin{enumerate}
\item There is a positive integer $s\uplabel{k}_\ell$
such that $D_{\ell}$ is the
Cartesian product of $A\uplabel{k}_\ell$ and $[s\uplabel{k}_\ell]$ (which
we
write as
$D_{\ell}\cong A\uplabel{k}_{\ell}\times[s\uplabel{k}_{l}]$).
\item There are positive constants
$\{\lambda\uplabel{k}_{\ell,i}:
i\in[s\uplabel{k}_{\ell}]\}$ such that,
for $\alpha_1,\ldots,\alpha_k \in A_\ell\uplabel k$ and
$i_1,\ldots,i_k\in [s_\ell\uplabel k]$,
\[
f\uplabel{k}_{\ell}( (\alpha_1,i_1),\ldots,(\alpha_k,i_k) ) =
   \lambda\uplabel{k}_{\ell,i_1}\cdots\lambda\uplabel{k}_{\ell,i_k}
S\uplabel{k}_\ell(\alpha_1,\ldots,\alpha_k).
\]
\end{enumerate}
If $g$ is $k$-factoring then we say that $g$ is \emph{$k$-equational}
if, for every  $\ell\in[m]$,
there is an Abelian group $(A\uplabel{k}_\ell,+)$
and an equation $\alpha_1 + \cdots +\alpha_k = a$ (for some element
$a\in A\uplabel{k}_\ell$) which defines $S\uplabel{k}_\ell$ in the
sense that
$(\alpha_1,\ldots,\alpha_k) \in S\uplabel{k}_\ell$
if and only if $\alpha_1 + \cdots +\alpha_k = a$.

Our main theorem (Theorem~\ref{thm:state}) can be restated as follows:
\begin{thm}
\label{thm:main}
Let $g:D^r \rightarrow \nnalgs$ be
a symmetric function
with arity $r\geq 3$.
If  $g$ is $r$-factoring and $r$-equational
then $\eval(g)$
is in \fp.  Otherwise, $\eval(g)$ is complete for \fpnump.
Moreover, the dichotomy is effective.
\end{thm}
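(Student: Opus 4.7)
I would prove Theorem~\ref{thm:main} by splitting into a tractability direction and a hardness direction. The tractability direction is algorithmic: assuming $g$ is $r$-factoring and $r$-equational, the decomposition $g = g_1 \oplus \cdots \oplus g_m$ reduces $\eval(g)$ to one component $\ell$ at a time, since any $\sigma$ contributing a nonzero term sends each connected component of $G$ entirely into some $D_\ell$. Within $D_\ell \cong A_\ell \times [s_\ell]$ a map $\sigma: V \to D_\ell$ splits as $(\sigma_A,\sigma_s)$, and its contribution factors as an indicator that $\sigma_A$ satisfies $\alpha_1 + \cdots + \alpha_r = a$ on every edge, times $\prod_{v\in V}\lambda_{\ell,\sigma_s(v)}^{\deg_G(v)}$. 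Summing over $\sigma_s$ vertex by vertex gives $\prod_v \sum_i \lambda_{\ell,i}^{\deg_G(v)}$, while summing over $\sigma_A$ reduces to counting solutions of a system of linear equations over the finite Abelian group $A_\ell$, which is in \fp via Smith normal form.

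\textbf{Hardness via induction on $k$.} The heart of the argument is the converse. I would proceed by induction on $k$ from $2$ to $r$, showing that either $g$ is $k$-factoring or $\eval(g)$ is already \numph. For the base case $k=2$, I would reduce $\eval(f\uplabel{2})$ to $\eval(g)$ by a gadget replacing each edge $\{u,v\}$ of a binary input with an $r$-hyperedge on $\{u,v\}$ together with $r-2$ fresh free vertices; summing over the free vertices realises $f\uplabel{2}$ exactly. Theorem~\ref{BulGro} then forces every block of $f\uplabel{2}$ to be rank one (non-bipartite case) or rank two (bipartite case), with the bipartite alternative being either incompatible with symmetric $r$-ary structure or absorbed into the Abelian-group equation (as hinted in the introduction via $\ints_2$). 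For the inductive step $k\to k+1$ I would use a similar gadget that fixes $k-1$ coordinates of $g$ and sums out the remaining $r-k-1$ coordinates, reducing evaluation of $f\uplabel{k+1}$ to $\eval(g)$; re-applying the binary dichotomy together with the inductive $k$-factoring data forces the refinement $\sim_{k+1}$ of $\sim_k$ to be compatible with exactly one factor of $A\uplabel{k}_\ell \times [s\uplabel{k}_\ell]$, yielding the level-$(k+1)$ factorization.

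\textbf{Equational step and effectiveness.} Once $g$ is known to be $r$-factoring, the $\lambda$ weights separate cleanly and the remaining counting question concerns the unweighted symmetric relation $S\uplabel{r}_\ell$ on $A\uplabel{r}_\ell$. Following the spirit of Kl\'\i ma, Larose and Tesson~\cite{KlLaTe06}, I would attempt to define an Abelian group operation on $A\uplabel{r}_\ell$ from $S\uplabel{r}_\ell$ via a canonical rule involving fixed representative coordinates: either this yields a well-defined group for which $S\uplabel{r}_\ell$ is a coset of a subgroup, or a concrete failure exhibits a sub-relation from which a reduction to a known \numph counting problem can be built. All the required structural tests (constructing the classes of $\sim_k$, the factorizations, and the group operation) run in time polynomial in the constant $|D|$, so the resulting dichotomy is effective.

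\textbf{Main obstacle.} The most delicate step is the inductive lift from $k$-factoring to $(k+1)$-factoring: one must show that $\sim_{k+1}$ uniformly refines only the $A\uplabel{k}_\ell$ factor of $D_\ell$, and that the rank-1 weight structure of $f\uplabel{k+1}$ is preserved under all symmetric plug-ins, without spawning spurious bipartite alternatives from Theorem~\ref{BulGro}. Extracting the Abelian-group equation at the top level from the purely combinatorial data of $S\uplabel{r}_\ell$, and ruling out any other tractable $r$-ary symmetric relations, will likely be the technically heaviest part of the proof.
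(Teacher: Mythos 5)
Your tractability direction is essentially the paper's: decompose by component, factor out the weights $\lambda_{\ell,i}^{\deg_G(v)}$, and count homomorphisms to the equation-defined relation $S_\ell^{[r]}$ via linear algebra over the Abelian group (the paper cites Kl\'\i ma--Larose--Tesson for this; Smith normal form would also do). The hardness direction, however, has two serious gaps where the paper does genuine technical work that your sketch replaces with hope.

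First, the inductive lift to $k$-factoring. You say that ``re-applying the binary dichotomy together with the inductive $k$-factoring data forces'' the clean refinement, but this is precisely where the argument is subtle and where your sketch would fail as written. The paper's Lemma~\ref{lem:one} does not apply Theorem~\ref{BulGro} to $f\uplabel{k}$ directly but to the Gram-like binary function $\fstar\uplabel{k}(z_1,z_1') = \sum_{z_2,\ldots,z_k} f\uplabel{k}(z_1,\ldots,z_k)f\uplabel{k}(z_1',\ldots,z_k)$, which is reflexive and hence has no bipartite components, neatly avoiding the bipartite branch you gesture at. The Cauchy--Schwarz inequality, combined with the rank-1 block decomposition of $\fstar\uplabel{k}$, gives the crucial orthogonality: $z_1\not\sim_k z_1'$ implies $\fstar\uplabel{k}(z_1,z_1')=0$. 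This is what yields the Latin hypercube structure for $S_\ell\uplabel{k}$. Even then, $k$-factoring requires the \emph{multiset} of weights $\{\lambda_z\uplabel{k}: z\in\eqclass{\alpha}{k}\}$ to be independent of $\alpha$, not merely the sum; establishing that requires the ``powering-up'' interpolation gadget of Lemma~\ref{lem:two}, which appends extra edges at each vertex to raise the weights to arbitrary positive integer powers, then infers multiset equality from equality of all power sums. None of this is present in your sketch.

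Second, the equational step. You propose to ``define an Abelian group operation on $A\uplabel{r}_\ell$ from $S\uplabel{r}_\ell$ via a canonical rule... either this yields a well-defined group, or a concrete failure exhibits a sub-relation from which a reduction... can be built.'' This is exactly the hard part, and the dichotomy between these two outcomes is not self-evident. The paper obtains the group structure by invoking Bulatov and Dalmau~\cite{BulDal07}: if $\ncsp(S)$ is not \numph then $S$ admits a \maltsev polymorphism $\varphi$. The Latin-hypercube structure defines a commutative binary operation $\alpha\cdot\beta$ from padded triples, and the \maltsev identity lets one express $\varphi$ in terms of $\cdot$ and prove associativity of $\alpha+\beta:=\varphi(\alpha,0,\beta)$ --- this is the P\'alfy-style argument of Lemma~\ref{lem:three}. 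Without the \maltsev polymorphism there is no mechanism forcing your ``canonical rule'' to be associative, and no clean failure mode you can convert into a hardness reduction. Separately, you need a reduction $\eval(S\uplabel{k})\le\eval(f\uplabel{k})$ to transfer this hardness back to the weighted problem; the paper supplies another interpolation (Lemma~\ref{cor:two}) exploiting the Latin hypercube to cancel the $\Lambda_\ell$ factors across components. Your claim that ``the $\lambda$ weights separate cleanly'' glosses over this.
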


Before proving Theorem~\ref{thm:main},
we prove that it is equivalent to Theorem~\ref{thm:state}.
First, it is easy to see that if $g$ satisfies the conditions of
Theorem~\ref{thm:main} (that is, it is $r$-factoring and $r$-equational)
then it also satisfies the conditions of Theorem~\ref{thm:state}
(taking $A_\ell$ to be $A_\ell\uplabel r$, $s_\ell$ to be
$s_\ell \uplabel r$, and
$\lambda_{\ell,i}$ to be $\lambda_{\ell,i}\uplabel r$).
The other direction is a little less obvious. Suppose that $g$ satisfies
the conditions of Theorem~\ref{thm:state}.
Fix any $\ell\in [m]$.
From the first condition of Theorem~\ref{thm:state}, we have $D_{\ell}\cong
A_{\ell}\times[s_{\ell}]$.
Consider any $\alpha,\alpha'\in A_\ell$ and
any $i,i'\in [s_\ell]$. We will argue that $(\alpha,i)\sim_r (\alpha',i')$
if and only if $\alpha=\alpha'$.
First, suppose $\alpha=\alpha'$. Then,
for any $\alpha_2,\ldots,\alpha_r\in A_\ell$ and
$i_2,\ldots,i_r\in [s_\ell]$, the second condition of Theorem~\ref{thm:state}
gives
$$g_{\ell}( (\alpha,i),(\alpha_2,i_2),\ldots,(\alpha_r,i_r) ) =
   \lambda_{\ell,i} \lambda_{\ell,i_2}\cdots\lambda_{\ell,i_r}
S_\ell(\alpha,\alpha_2,\ldots,\alpha_r)$$
and
$$g_{\ell}( (\alpha',i'),(\alpha_2,i_2),\ldots,(\alpha_r,i_r) ) =
   \lambda_{\ell,i'} \lambda_{\ell,i_2}\cdots\lambda_{\ell,i_r}
S_\ell(\alpha,\alpha_2,\ldots,\alpha_r),$$
so, by the definition of $\sim_r$,
$(\alpha,i)\sim_r (\alpha',i')$. Next, suppose $(\alpha,i)\sim_r
(\alpha',i')$.
Then there is a positive constant $\lambda$ such
that, for any $\alpha_2,\ldots,\alpha_r\in A_\ell$ and
$i_2,\ldots,i_r\in [s_\ell]$,
$$\lambda_{\ell,i} \lambda_{\ell,i_2}\cdots\lambda_{\ell,i_r}
S_\ell(\alpha,\alpha_2,\ldots,\alpha_r) = \lambda
\lambda_{\ell,i'} \lambda_{\ell,i_2}\cdots\lambda_{\ell,i_r}
S_\ell(\alpha',\alpha_2,\ldots,\alpha_r).$$
We conclude that, for any $\alpha_2,\ldots,\alpha_r\in A_\ell$,
$S_\ell(\alpha,\alpha_2,\ldots,\alpha_r) =
S_\ell(\alpha',\alpha_2,\ldots,\alpha_r)$.
By the third condition in Theorem~\ref{thm:state}, we conclude that
$\alpha=\alpha'$.
We have now shown that $(\alpha,i)\sim_r (\alpha',i')$
if and only if $\alpha=\alpha'$. This implies that we
can take the set $A_\ell\uplabel r$
of unique representatives
to be $A_\ell$ and we can take $s_\ell\uplabel r$ to be $s_\ell$.
Then, taking $\lambda_{\ell,i}\uplabel r$ to be $\lambda_{\ell,i}$,
$g$ is $r$-factoring and $r$-equational (so it satisfies
the conditions of Theorem~\ref{thm:main}).
So we conclude that the two theorems are equivalent.

Now that we have shown that Theorem~\ref{thm:main} is equivalent to
Theorem~\ref{thm:state},
the rest of the paper will focus on proving Theorem~\ref{thm:main}.
The case $r=2$ is that of weighted {\it graph\/} homomorphism,
which was analysed by Bulatov and Grohe~\cite{BulGro05}.
Theorem~\ref{thm:main} is true also when $r=2$.
In this situation, it could be viewed
as a restatement of their result.  Note, however,
that ``$2$-equational'' is a restricted notion
that places severe constraints on the groups
$(A\uplabel{2}_\ell,+)$ that can arise.
Indeed the only possibilities that are consistent with
the connectivity relation $\equiv$ are the 2-element
group $C_{2}$ (``bipartite component'') and
the trivial group (``non-bipartite component'').

It will follow from the proof of Theorem~\ref{thm:main} (assuming
that $\nump\not\subseteq \mathrm{FP}$)
that a symmetric function $g$ of arity~$r\geq3$
that is $r$-factoring and $r$-equational is
$k$-factoring and $k$-equational for all $2\leq k<r$.
In fact, the Abelian groups $(A\uplabel{k}_{\ell},{+})$
will all be trivial for $k<r$:  non-trivial group structure
is only possible at the top level.  As a first step in the
proof of Theorem~\ref{thm:main}, we verify that non-trivial group structure
is only possible at the top level.

\begin{lem}
\label{lem:simple}
Let $g:D^r \rightarrow \nnalgs$ be
a symmetric function
with arity $r\geq 3$.
If $g$ is $k$-factoring and $k$-equational for some $k<r$
then
for every $\ell\in [m]$
there are positive constants
$\{\lambda\uplabel{k}_{\ell,i}:
i\in D_{\ell}\}$ such that,
for $i_1,\ldots,i_k\in D_{\ell}$,
\[
f\uplabel{k}_{\ell}( i_1,\ldots,i_k ) =
   \lambda\uplabel{k}_{\ell,i_1}\cdots\lambda\uplabel{k}_{\ell,i_k}
.\]
\end{lem}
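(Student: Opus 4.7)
The plan is to reduce the lemma to the statement $|A\uplabel{k}_\ell|=1$ for every $\ell\in[m]$. Once this is known, the isomorphism $D_\ell\cong A\uplabel{k}_\ell\times[s\uplabel{k}_\ell]$ collapses to $D_\ell\cong[s\uplabel{k}_\ell]$, the relation $S\uplabel{k}_\ell$ becomes identically~$1$, and the $k$-factoring identity yields exactly the required product form after relabelling the $\lambda\uplabel{k}_{\ell,\cdot}$ by the elements of $D_\ell$. So I would fix $\ell$, assume for contradiction that $|A\uplabel{k}_\ell|\geq 2$, and derive an impossible configuration of $g$.

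The key step exploits the fact that $k<r$, so that the marginalization identity $f\uplabel{k}(z_1,\ldots,z_k)=\sum_{z_{k+1},\ldots,z_r\in D}g(z_1,\ldots,z_r)$ is available. Combined with the nonnegativity and symmetry of $g$, this gives the following claim: if $g_\ell(z_1,\ldots,z_r)>0$, and we write $z_j=(\alpha_j,i_j)$ under the factoring of $D_\ell$, then all the $\alpha_j$'s coincide. Indeed, for every $k$-element subset $J=\{j_1,\ldots,j_k\}$ of $\{1,\ldots,r\}$ the marginalization gives
\[
f\uplabel{k}_\ell(z_{j_1},\ldots,z_{j_k})\geq g_\ell(z_1,\ldots,z_r)>0,
\]
so by the $k$-equational hypothesis $\sum_{j\in J}\alpha_j=a$. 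Comparing two $k$-subsets that differ in a single position yields $\alpha_i=\alpha_{i'}$ for any $i,i'\in\{1,\ldots,r\}$.

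To close the argument I would exhibit a solution $(\alpha_1,\ldots,\alpha_k)$ of $\alpha_1+\cdots+\alpha_k=a$ in which not all $\alpha_j$ agree. Pairing such a tuple with arbitrary indices $i_1,\ldots,i_k\in[s\uplabel{k}_\ell]$ makes $f\uplabel{k}_\ell((\alpha_1,i_1),\ldots,(\alpha_k,i_k))>0$, so marginalization forces some extension to a positive value of $g_\ell$, at which point the claim above demands $\alpha_1=\alpha_2$, a contradiction. For $k\geq 3$ the required solution is immediate: fix distinct $\alpha\neq\beta$ in $A\uplabel{k}_\ell$ and take $(\alpha,\beta,a-\alpha-\beta,0,\ldots,0)$. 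The delicate case is $k=2$, where one needs $\alpha_1+\alpha_2=a$ with $\alpha_1\neq\alpha_2$; if no such pair existed then $S\uplabel{2}_\ell$ would coincide with the equality relation on $A\uplabel{2}_\ell$, and $R\uplabel{2}_\ell$ would split $D_\ell$ into the disjoint fibres $\{\alpha\}\times[s\uplabel{2}_\ell]$, contradicting the fact that $D_\ell$ is a single equivalence class of $\equiv$. This $k=2$ subcase, in which the contradiction must come from connectivity rather than pure algebra, is where I expect the main subtlety of the proof to lie.
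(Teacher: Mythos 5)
Your proposal is correct, and its core step coincides with the paper's: exploit $k<r$ to marginalise a positive value (the paper uses $f\uplabel{k+1}$ where you use $g$ directly, but this is the same move), observe that the resulting $k$-subset marginals $f\uplabel{k}_\ell$ are all positive, and conclude from the defining equation of $S\uplabel{k}_\ell$ that every coordinate must carry the same $A$-component. Where you diverge is in the wrap-up. The paper applies this observation to an $R\uplabel{2}$-edge: if $((\alpha,i),(\beta,j))\in R\uplabel{2}$ then some extension to a positive $f\uplabel{k+1}$-tuple exists, whence $\alpha=\beta$; the transitive closure over the connected component $D_\ell$ then gives $|A\uplabel{k}_\ell|=1$ uniformly for all $k\ge 2$. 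You instead argue by contradiction, constructing an explicit $S\uplabel{k}_\ell$-tuple with two distinct $A$-components; this works immediately when $k\ge 3$ via $(\alpha,\beta,a-\alpha-\beta,0,\ldots,0)$, but forces a separate $k=2$ subcase, and that subcase is exactly the paper's connectivity argument. Your proof is sound, but the case split is avoidable: once you have the claim that $g_\ell>0$ forces equal $\alpha$'s, applying it to any $R\uplabel{2}$-edge inside $D_\ell$ finishes the lemma directly, with no need to exhibit a bad tuple and no $k=2$ anomaly.
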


\begin{proof}
$g$ is $k$-factoring so
$D_{\ell}\cong A\uplabel{k}_{\ell}\times[s\uplabel{k}_{l}]$ and,
for $\alpha_1,\ldots,\alpha_k \in A_\ell\uplabel k$ and
$i_1,\ldots,i_k\in [s_\ell\uplabel k]$,
\[
f\uplabel{k}_{\ell}( (\alpha_1,i_1),\ldots,(\alpha_k,i_k) ) =
   \lambda\uplabel{k}_{\ell,i_1}\cdots\lambda\uplabel{k}_{\ell,i_k}
S\uplabel{k}_\ell(\alpha_1,\ldots,\alpha_k).
\]
Now consider
$\alpha_1,\ldots,\alpha_{k+1} \in A_\ell\uplabel k$ and
$i_1,\ldots,i_{k+1}\in [s_\ell\uplabel k]$.
If \[f\uplabel{k+1}_{\ell}( (\alpha_1,i_1),\ldots,(\alpha_{k+1},i_{k+1}) )>0\]
then
\[f\uplabel{k}_{\ell}(
(\alpha_1,i_1),\ldots,(\alpha_{k-1},i_{k-1}),(\alpha_{k},i_{k}) )>0\]
and
\[f\uplabel{k}_{\ell}(
(\alpha_1,i_1),\ldots,(\alpha_{k-1},i_{k-1}),(\alpha_{k+1},i_{k+1}) )>0.\]
So  since $g$ is $k$-equational,
\[\alpha_1+ \cdots + \alpha_{k-1} + \alpha_k =
\alpha_1 + \cdots + \alpha_{k-1} + \alpha_{k+1} = a\] so
$\alpha_k=\alpha_{k+1}$.
By symmetry, $\alpha_1 = \cdots = \alpha_{k+1}$.

%
Now if $((\alpha,i),(\beta,j))\in R\uplabel 2$ 
then there exist $\alpha_2, \ldots, \alpha_k$ 
and $i_2, \ldots, i_{k}$  such that
$$f\uplabel{k+1} ( (\alpha, i), (\beta, j), (\alpha_2, i_2), \ldots, 
(\alpha_k, i_{k}) ) >0.$$ 
Thus, $\alpha=\beta$.
Taking the transitive closure,
we note that if
$(\alpha,i)$ and $(\beta,j)$ are both in $D_\ell$
then
 $\alpha=\beta$.
Hence $|A_\ell\uplabel k|=1$ so $D_\ell = [s_\ell \uplabel k]$.
\end{proof}

Our strategy for proving Theorem~\ref{thm:main}
is  now as follows.  Suppose $\eval(g)$ is {\it not\/}
\nump-hard.  We prove, for $k=2,3,\ldots,r$ in turn, that $g$
is $k$-factoring and $k$-equational.  For $k=2$ this follows
straightforwardly from Theorem~\ref{BulGro}.
The inductive step from $k$ to $k+1$ is where the work lies,
but Lemma~\ref{lem:simple} plays a role.  Ultimately, we deduce
that $g$ is $r$-factoring and $r$-equational.  Conversely,
if $g$ is $r$-factoring and $r$-equational, the partition function
$Z^{g}$ may be computed in polynomial time
using existing algorithms for counting
solutions to systems over Abelian groups, and hence $\eval(g)$
is polynomial time solvable.

\section{Preliminaries}
\label{sec:gadget}
An easy observation that will be frequently used in the
rest of this paper is the following.
\begin{lem}\label{lem:veasy}
If $\eval(f\uplabel k)$ is \nump-hard, for some $2\leq k<r$, then
so is~$\eval(g)$.
\end{lem}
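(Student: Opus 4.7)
\textbf{Proof plan for Lemma~\ref{lem:veasy}.} The plan is to give a direct polynomial-time (parsimonious) reduction from $\eval(f\uplabel k)$ to $\eval(g)$ via a trivial ``padding'' gadget. Given an instance $G'$ of $\eval(f\uplabel k)$, which is a $k$-uniform hypergraph, I will construct an $r$-uniform hypergraph $G$ as follows: for each edge $e=\{u_1,\ldots,u_k\}\in E(G')$, introduce $r-k$ fresh auxiliary vertices $w_1^e,\ldots,w_{r-k}^e$ (distinct across all edges and disjoint from $V(G')$), and put the edge $\{u_1,\ldots,u_k,w_1^e,\ldots,w_{r-k}^e\}$ into $E(G)$. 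Because the padding vertices are fresh, each new edge has $r$ distinct vertices and all new edges are distinct, so $G$ is a legitimate $r$-uniform hypergraph; moreover $|V(G)|=|V(G')|+(r-k)|E(G')|$ and $|E(G)|=|E(G')|$, so the construction runs in polynomial time.

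It remains to verify that $Z^g(G)=Z^{f\uplabel k}(G')$. Split any assignment $\sigma:V(G)\to D$ into its restriction $\sigma'$ to $V(G')$ and its values on the auxiliary vertices. Since each $w_i^e$ occurs in exactly one edge of~$G$, the sum over the auxiliary values factorises over the edges of $G'$, giving
\[
Z^g(G)=\sum_{\sigma':V(G')\to D}\prod_{e\in E(G')}\sum_{z_{k+1},\ldots,z_r\in D}g(\sigma'(u_1),\ldots,\sigma'(u_k),z_{k+1},\ldots,z_r).
\]
By the definition of $f\uplabel k$, the inner sum equals $f\uplabel k(\sigma'(u_1),\ldots,\sigma'(u_k))$, so the right-hand side is exactly $Z^{f\uplabel k}(G')$. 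Hence an oracle for $\eval(g)$ computes $\eval(f\uplabel k)$ in polynomial time, and \nump-hardness of the latter transfers to the former. There is no real obstacle here: the only things to check are that the gadget preserves the hypergraph structure (distinct vertices within each edge, distinct edges) and that $f\uplabel k$'s symmetry makes the ordering of arguments irrelevant, both of which are immediate from the construction.
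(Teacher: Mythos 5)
Your proof is correct and is essentially the same padding reduction the paper uses: add $r-k$ fresh vertices to each $k$-edge so that summing over the auxiliary vertices recovers $f^{[k]}$ on each edge. You simply spell out the factorisation verification that the paper leaves as ``easy to verify.''
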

\begin{proof}
An instance of $\eval(f\uplabel k)$ is a $k$-uniform
hypergraph.  Simply pad each edge $e=(u_{1},\ldots,u_{k})$
to size~$r$ by adding $r-k$ fresh vertices as follows:
$(u_{1},\ldots,u_{k},z^{e}_{k+1},\ldots,z^{e}_{r})$.
It is easy to verify that this is a polynomial time
reduction from $\eval(f\uplabel k)$ to $\eval(g)$.
\end{proof}
Another easy observation is that the partition function
$Z^{g}(G)$ factorises if $G$ is not connected.
So we may assume henceforth that the instance
hypergraph~$G$ is connected.

For $z\in D$
let ${\lambda'}\uplabel{k}_{z}$ be defined so that,
for all $z_2,\ldots,z_k\in D$,
\[f\uplabel{k}(z,z_2,\ldots,z_k)
={\lambda'}_{z}\uplabel{k} f\uplabel{k}(
\eqrep{z}{k},z_2,\ldots, z_k).\]
(Recall from the definition of $\sim_k$ that 
${\lambda'}_z\uplabel{k}$ does not depend on 
$z_2, ..., z_k$.)
Then, by symmetry, we have
\begin{equation}\label{eq:lambdaprime}
f\uplabel{k}(z_1,\ldots,z_k) =
{\lambda'}_{z_1}\uplabel{k}
\cdots
{\lambda'}_{z_k}\uplabel{k}
f\uplabel{k} (\eqrep{z_1}{k},\ldots,\eqrep{z_k}{k}).
\end{equation}

Define
$$\fstar\uplabel{k}(z_1,z'_1) = \sum_{z_{2},\ldots,z_k\in D}
f\uplabel{k}(z_1,z_2,\ldots,z_k)
f\uplabel{k}(z'_1,z_2,\ldots,z_k).$$
Let $\Rstar\uplabel{k}$
be the (symmetric)
binary relation underlying $\fstar\uplabel{k}$.
It will turn out that $\Rstar\uplabel{k}$ and $\sim_{k}$ coincide
when $g$ is not \nump-hard.

For the purposes of this paper, a symmetric relation $R
\subset \mathcal{A}^k$
is said to be a {\it Latin hypercube\/} if, for all
$\alpha_1,\ldots,\alpha_{k-1}\in \mathcal{A}$,
there exists a unique
$\alpha_k\in\mathcal{A}$ such that $(\alpha_1,\ldots,\alpha_k)\in R$.
Note that symmetry implies similar statements with the $\alpha_i$s permuted.
This definition specialises to the familiar notion of Latin
square
if we take $k=3$ and
think of $\alpha_1$, $\alpha_2$ and $\alpha_3$
as ranging over rows,
columns and symbols, respectively.
For $k>3$ it is consistent with the existing, if less familiar,
notion of Latin $(k-1)$-hypercube.

We use the following
interpolation result, which is \cite[Lemma 3.2]{DyeGre00}
\begin{lem}
\label{lem:interp}
Let $\eta_1,\ldots,\eta_m$ be known distinct nonzero constants
Suppose that
we know values $Z_1,\ldots,Z_m$ such that
$Z_p = \sum_{\ell=1}^m \gamma_\ell \eta_\ell^p$ for
$1\leq p \leq m$. The coefficients
$\gamma_1,\ldots,\gamma_m$ can be evaluated in polynomial time.
\end{lem}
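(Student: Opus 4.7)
The plan is to recognise the hypothesis as a linear system with a Vandermonde-type coefficient matrix, and then appeal to the standard fact that such a system is uniquely solvable in polynomial time whenever its nodes are distinct and nonzero.

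First I would rewrite the $m$ identities $Z_p = \sum_{\ell=1}^{m} \gamma_\ell \eta_\ell^{p}$ in matrix form as $\mathbf{Z} = M \boldsymbol{\gamma}$, where $M \in \algs^{m\times m}$ has entries $M_{p,\ell} = \eta_\ell^{p}$ and $\boldsymbol{\gamma}=(\gamma_1,\ldots,\gamma_m)^{T}$. The $p$-indices start at $1$ rather than $0$, so $M$ is not literally a Vandermonde matrix, but this is easily handled: factor $M = V D$, where $V$ is the classical Vandermonde matrix with $V_{p,\ell} = \eta_\ell^{p-1}$ and $D = \mathrm{diag}(\eta_1,\ldots,\eta_m)$.

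Next I would verify invertibility. Since the $\eta_\ell$ are distinct, $\det V = \prod_{i<j}(\eta_j - \eta_i) \neq 0$, and since each $\eta_\ell$ is nonzero, $\det D \neq 0$; hence $M$ is invertible and the system has a unique solution $\boldsymbol{\gamma} = D^{-1} V^{-1} \mathbf{Z}$. Given that the $\eta_\ell$ are known, the matrix $M$ can be written down explicitly, and then Gaussian elimination solves the system in time polynomial in $m$ and the bit sizes of the entries. By the discussion of exact arithmetic in $\rats(\theta)$ earlier in the paper, all of this can be carried out exactly on a Turing machine.

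There is no real obstacle here, as the argument is a routine application of Vandermonde invertibility plus polynomial-time linear algebra; the only substantive point to note is that the exactness of computation, which is occasionally a delicate matter when working over $\algs$, has already been handled by the setup in Section~\ref{subsec:defs}.
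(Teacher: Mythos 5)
Your argument is correct and is the standard one; note, however, that the paper does not prove this lemma itself but simply cites it as Lemma~3.2 of Dyer and Greenhill~\cite{DyeGre00}, whose proof is the same Vandermonde-determinant computation you give. Your handling of the index offset (factoring $M=VD$ with $D=\mathrm{diag}(\eta_1,\ldots,\eta_m)$ and noting both factors are invertible) and the appeal to exact arithmetic in $\rats(\theta)$ are exactly the right points to make explicit.
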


Lemma~\ref{lem:interp} has the following consequence,
since if we have $\eta_i=\eta_j$ below
we can combine $\gamma_i$ and $\gamma_j$ into $\gamma_i+\gamma_j$.

\begin{cor}
\label{cor:interp}
Let $\eta_1,\ldots,\eta_m$ be known nonzero constants
Suppose that
we know values $Z_1,\ldots,Z_m$ such that
$Z_p = \sum_{\ell=1}^m \gamma_\ell \eta_\ell^p$ for $1\leq p \leq m$. The
value $Z_0 = \sum_{\ell=1}^m \gamma_\ell$ can be computed
in polynomial time.
\end{cor}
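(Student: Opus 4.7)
The corollary is an easy consequence of Lemma~\ref{lem:interp}, and indeed the paper already supplies the essential idea in the sentence preceding the statement. So my plan is just to flesh out that hint into a clean reduction.

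The plan is to reduce to the distinct-$\eta_\ell$ case by collapsing repeated values. Concretely, let $\tilde\eta_1,\ldots,\tilde\eta_{m'}$ be the distinct values appearing among $\eta_1,\ldots,\eta_m$ (so $m'\le m$), and for each $j\in[m']$ set
\[
\tilde\gamma_j \;=\; \sum_{\ell:\,\eta_\ell=\tilde\eta_j}\gamma_\ell.
\]
Grouping the terms of the sum $Z_p=\sum_{\ell=1}^m\gamma_\ell\eta_\ell^p$ according to the value of $\eta_\ell$ then gives, for every $p$, the identity $Z_p=\sum_{j=1}^{m'}\tilde\gamma_j\tilde\eta_j^p$.

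Next, I would apply Lemma~\ref{lem:interp} to the known nonzero distinct constants $\tilde\eta_1,\ldots,\tilde\eta_{m'}$ and the known values $Z_1,\ldots,Z_{m'}$ (we only use the first $m'$ of the available equations, which is permissible since $m'\le m$). This yields $\tilde\gamma_1,\ldots,\tilde\gamma_{m'}$ in polynomial time. Note that we do not, and cannot, recover the individual $\gamma_\ell$, but this is unnecessary: the target quantity is
\[
Z_0 \;=\; \sum_{\ell=1}^m\gamma_\ell \;=\; \sum_{j=1}^{m'}\Bigl(\sum_{\ell:\,\eta_\ell=\tilde\eta_j}\gamma_\ell\Bigr) \;=\; \sum_{j=1}^{m'}\tilde\gamma_j,
\]
which is immediate from the $\tilde\gamma_j$ produced by Lemma~\ref{lem:interp}.

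There is no real obstacle here: the only subtlety is ensuring that the distinct values $\tilde\eta_j$ are still nonzero (they are, since the original $\eta_\ell$ are all nonzero) and that we have enough equations to invoke Lemma~\ref{lem:interp} (we do, since $m'\le m$). Identification of equal $\eta_\ell$'s is done in constant time per pair (exact arithmetic is available in $\algs$ by the discussion in Section~\ref{sec:intro}), so the whole procedure runs in polynomial time.
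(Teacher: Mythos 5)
Your proof is correct and is essentially the same argument the paper intends: the sentence preceding the corollary already hints at collapsing repeated $\eta$'s by summing the corresponding $\gamma$'s, and you simply make that reduction explicit before invoking Lemma~\ref{lem:interp} on the $m'\le m$ distinct values. Nothing to add.
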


As mentioned earlier, the base case ($k=2$)
in the proof of Theorem~\ref{thm:main} will follow from the
result of Bulatov and Grohe~\cite{BulGro05}.
They examined the complexity of $\ncsp(g)$   and
there is no  immediate polynomial time reduction from
$\ncsp(g)$ to $\eval(g)$.
The next lemma
provides such a reduction for the case that we require.
\begin{lem}
\label{lem:twostretch}
Suppose $h:D^{2}\to\nnalgs$ has connected components
$D_{1},\ldots,D_{\ell}$, and underlying relation~$R_{h}$.
Suppose also that $R_{h}$ has no bipartite components.
If the restriction $h_{\ell}$ of~$h$ to any component~$D_{\ell}$
is not rank~1, then $\eval(h)$ is \nump-hard.
\end{lem}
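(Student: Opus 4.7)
The plan is to reduce $\eval(h \cdot h)$ to $\eval(h)$, where $h \cdot h$ denotes the symmetric binary function on $D$ given by $(h \cdot h)(x, y) = \sum_{z \in D} h(x, z)\, h(z, y)$ (the matrix square of $h$'s adjacency matrix). The \nump-hardness of $\eval(h \cdot h)$ will then follow by applying Theorem~\ref{BulGro} directly to $h \cdot h$.

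First I would verify that $h \cdot h$ inherits the structural conditions under which Theorem~\ref{BulGro} yields \nump-hardness. Three observations suffice. \emph{(a) The connected components of $h \cdot h$ coincide with those of $h$:} between distinct components of $h$, $h \cdot h$ vanishes identically; within a single component $D_\ell$, non-bipartite connectedness of $h$ gives aperiodicity, so for every $x, y \in D_\ell$ there is some $k$ with $h^{2k}(x, y) > 0$, equivalently $(h \cdot h)^k(x, y) > 0$. \emph{(b) Each component of $h \cdot h$ is itself non-bipartite:} one has $(h \cdot h)(x, x) = \sum_z h(x, z)^2 > 0$ whenever $x$ has an $h$-neighbour, and the rank-at-least-$2$ hypothesis on some $h_\ell$ forces $|D_\ell| \geq 2$, so every $x \in D_\ell$ has such a neighbour, placing a self-loop at every vertex of every non-trivial component. \emph{(c) The rank of $h \cdot h$ restricted to $D_\ell$ equals the rank of $h_\ell$:} this is immediate from the spectral theorem applied to the symmetric matrix $h_\ell$, since squaring preserves the non-vanishing of real eigenvalues. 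Parts~(1) and~(3) of Theorem~\ref{BulGro} then give that $\eval(h \cdot h)$ is \nump-hard.

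Second I would exhibit the reduction, which is the standard edge-subdivision (``$2$-stretch'') gadget. Given a simple-graph instance $G$ of $\eval(h \cdot h)$, construct $G'$ by replacing each edge $\{u, v\} \in E(G)$ with a path of length~$2$ having endpoints $u, v$ and one fresh internal vertex $w_{uv}$, a distinct vertex for each edge. Since each $w_{uv}$ has exactly the two new incident edges $\{u, w_{uv}\}$ and $\{w_{uv}, v\}$, the resulting $G'$ has no loops and no parallel edges, so it is a legitimate input to $\eval(h)$. Summing out the assignments to the $w_{uv}$'s yields
\[
Z^h(G') \;=\; \sum_{\sigma\,:\,V(G) \to D}\; \prod_{\{u,v\}\,\in\, E(G)}\; \sum_{c \in D} h(\sigma(u), c)\, h(c, \sigma(v)) \;=\; Z^{h \cdot h}(G),
\]
which is the required polynomial time reduction.

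The reduction itself is mechanical; the substantive content is in the three observations. The one place where the lemma's hypothesis really matters is in~(a): a bipartite component of $h$ would split under squaring into its two colour classes, each possibly of rank~$1$ even when the bipartite component of $h$ had rank~$2$, collapsing the witness of hardness that Theorem~\ref{BulGro} provides. This is precisely why $R_h$ must be assumed to have no bipartite components.
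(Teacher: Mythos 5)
Your proof uses exactly the same gadget as the paper (the $2$-stretch, with $Z^{h}(G')=Z^{h\cdot h}(G)$) and the same key fact (the rank of the Gram matrix $h\cdot h = h h^{\mathsf T}$ equals the rank of $h$; your spectral-theorem formulation is the same content). The difference is the direction in which Theorem~\ref{BulGro} is invoked. You run it \emph{forwards}: establish that $h\cdot h$ has a connected, non-bipartite block of rank at least $2$, invoke the hardness direction of Bulatov--Grohe on $h\cdot h$, and then reduce to $\eval(h)$. The paper runs it in contrapositive: the $2$-stretch gives $\ncsp(h^{(2)})\le\eval(h)$; if $\eval(h)$ were not hard then $\ncsp(h^{(2)})$ would not be, so by Bulatov--Grohe each block $h^{(2)}_{\ell}$ would have rank~$1$, and by the Gram-matrix fact each $h_{\ell}$ would too. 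The paper's direction is slightly cleaner because it never has to verify anything about the component or bipartiteness structure of $h^{(2)}$ --- those checks are exactly your items (a) and (b), which are correct but which the paper avoids entirely. There is one point where you are less careful than the paper is trying to be: the paper explicitly flags that Bulatov--Grohe proved hardness for $\ncsp$ (multigraph/CSP inputs), not for $\eval$ (simple-hypergraph inputs), and that this very lemma is the bridge. By reducing from simple-graph instances of $\eval(h\cdot h)$ and citing Theorem~\ref{BulGro} to get $\eval(h\cdot h)$ hard on simple graphs, you are implicitly reading Bulatov--Grohe as already giving simple-graph hardness, which is precisely what the lemma is supposed to supply. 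The paper sidesteps this by reducing from $\ncsp(h^{(2)})$ (allowing multigraph instances, which the $2$-stretch cleans up) rather than from $\eval(h\cdot h)$. This is a subtle logical point rather than a mathematical error, and your items (a)--(c) would equally serve the paper's formulation; but it is worth being aware that the $\ncsp$-versus-$\eval$ distinction is the whole reason the lemma exists.
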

\begin{proof}
Let $I$ be an instance of $\ncsp(h)$.  View $I$ as a multigraph
with possible loops and parallel edges.
Form the graph~$G$ as the ``2-stretch''
of~$I$;  that is to say, subdivide each edge of~$I$ by introducing
a new vertex.  Note that $G$ is a simple graph without loops.
Define the symmetric function
$h^{(2)}:D^{2}\to\nnalgs$ by
$h^{(2)}(x,y)=\sum_{z\in D}h(x,z)h(y,z)$. Note that
$Z^{h}(G)=Z^{h^{(2)}}(I)$, and hence $\ncsp(h^{(2)})$ reduces to
$\eval(h)$.

Suppose $\eval(h)$ is not \nump-hard.
Then $\ncsp(h^{(2)})$ is not \nump-hard.
By~\cite[Thm~1(1)]{BulGro05},
$h^{(2)}$, viewed as a matrix, is a direct sum of
rank-1 matrices; i.e., each $h^{(2)}_{\ell}$ has rank~1.
But each $h^{(2)}_{\ell}$ is the ``Gram matrix'' of~$h_{\ell}$
(the product of $h_\ell$, viewed as a matrix, and its transpose),
and it is a elementary fact that the rank of a matrix and
its corresponding Gram matrix are equal~\cite{Mirsky90}.
Thus, for all~$\ell$, the restrictions~$h_{\ell}$ of~$h$
to~$D_{\ell}$ are rank~1.
\end{proof}

\section{Factoring}
\begin{lem}
\label{lem:base}
Let $g:D^r \rightarrow \nnalgs$ be
a symmetric function
with arity $r\geq 3$.
Either\/ $\eval(f\uplabel 2)$ is \nump-hard
(which implies that $\eval(g)$ is \nump-hard)
or $g$ is $2$-factoring and $2$-equational.
\end{lem}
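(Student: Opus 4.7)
The plan: by Lemma~\ref{lem:veasy}, the implication ``$\eval(f\uplabel{2})$ is \nump-hard $\Rightarrow$ $\eval(g)$ is \nump-hard'' is already available, so it suffices to prove that if $\eval(f\uplabel{2})$ is not \nump-hard then $g$ is $2$-factoring and $2$-equational. I would obtain this by applying Lemma~\ref{lem:twostretch} to $h=f\uplabel{2}$, for which the critical side-condition is that $R\uplabel{2}$ has no bipartite components.

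This bipartite-ruling-out step is the main obstacle, and it is precisely where the hypothesis $r\ge 3$ gets used. Suppose for contradiction that some component $D_\ell$ admits a nontrivial bipartition $D_\ell=X\sqcup Y$, so $f\uplabel{2}(z_1,z_2)=0$ whenever $z_1,z_2$ lie in the same part. Expanding the definition, $\sum_{z_3,\ldots,z_r}g(z_1,z_2,z_3,\ldots,z_r)=0$, and since $g$ is non-negative each summand vanishes. By the symmetry of $g$, this forces $g$ to vanish on any $r$-tuple containing two entries of $X$, and similarly for $Y$. But $g=f\uplabel{r}_1\oplus\cdots\oplus f\uplabel{r}_m$ forces any support tuple of $g_\ell$ to lie entirely in $D_\ell=X\cup Y$, and for $r\ge 3$ pigeonhole puts at least two of its entries into one of $X$, $Y$, contradicting what we just showed. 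Hence $g_\ell\equiv 0$, contradicting the existence of edges of $R\uplabel{2}$ in $D_\ell$.

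With bipartite components ruled out, Lemma~\ref{lem:twostretch} applied contrapositively tells me that every restriction $f\uplabel{2}_\ell$ has rank exactly~$1$. I would then write the standard rank-$1$ factorization $f\uplabel{2}_\ell(x,y)=\mu_x\mu_y$ with $\mu:D_\ell\to\palgs$; positivity of $\mu$ follows from the connectivity of $D_\ell$ in the non-singleton case, and from the completeness of $R\uplabel{1}$ (which forces $f\uplabel{2}(z,z)>0$) when $D_\ell=\{z\}$ is a singleton. For any $x,x'\in D_\ell$ and any $z\in D$, the identity $f\uplabel{2}(x,z)=(\mu_x/\mu_{x'})f\uplabel{2}(x',z)$ holds (trivially for $z\notin D_\ell$, and with the stated constant for $z\in D_\ell$), so all of $D_\ell$ forms a single $\sim_2$-class. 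Hence $A\uplabel{2}_\ell$ is a singleton and $s\uplabel{2}_\ell=|D_\ell|$.

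The final step is bookkeeping: identify $D_\ell$ with $A\uplabel{2}_\ell\times[s\uplabel{2}_\ell]$ via any enumeration of $D_\ell$, set $\lambda\uplabel{2}_{\ell,i}=\mu_i$, let $S\uplabel{2}_\ell$ be the singleton relation on $A\uplabel{2}_\ell$, and equip $A\uplabel{2}_\ell$ with the trivial group structure so that $S\uplabel{2}_\ell$ is defined by the equation $\alpha_1+\alpha_2=0$. The identity $f\uplabel{2}_\ell((\alpha_1,i_1),(\alpha_2,i_2))=\lambda\uplabel{2}_{\ell,i_1}\lambda\uplabel{2}_{\ell,i_2}S\uplabel{2}_\ell(\alpha_1,\alpha_2)$ then holds by construction, giving $2$-factoring and $2$-equationality.
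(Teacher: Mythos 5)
Your proof is correct and follows essentially the same route as the paper: rule out bipartite components of $R\uplabel{2}$ so that Lemma~\ref{lem:twostretch} applies, obtain the rank-$1$ factorization $f\uplabel{2}_\ell(x,y)=\mu_x\mu_y$, deduce $|A\uplabel{2}_\ell|=1$, and take the trivial group with $\lambda\uplabel{2}_{\ell,i}=\mu_i$. The only local variation is the non-bipartiteness step, where the paper directly exhibits a triangle (or loop) from any $R\uplabel{2}$-edge via $R\uplabel{3}$ while you argue by contradiction using pigeonhole on $r\ge 3$; both are valid.
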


\begin{proof}
First, note that $R\uplabel 2$ has no bipartite components:
If $(z_1,z_2)\in R\uplabel 2$
then there is a $z_3$ such that
$(z_1,z_2,z_3) \in R\uplabel 3$. By the symmetry of $f\uplabel 3$,
we find that $(z_1,z_3)$ and $(z_2,z_3)$ are also in $R\uplabel 2$,
so the component containing $z_1$ and $z_2$
is not bipartite.

Now,
by \cite{BulGro05} (using Lemma~\ref{lem:twostretch}),
$f\uplabel2_\ell$ has rank~1.
Thus, there are positive constants
$\{ \mu_z:
z\in D\}$  such
that, for every $\ell\in[m]$ and every
$z_1,z_2$ in $D_\ell$, the following holds.
  \begin{equation}\label{eq:rank1}
    f\uplabel 2_{\ell}(z_1,z_2)=\mu_{z_1}\mu_{z_2}.
\end{equation}
We conclude that
all elements in $D_\ell$ are related by $\sim_2$, so $|A\uplabel 2_\ell|=1$.
Thus, we can take $s\uplabel 2_\ell=|D_\ell|$
and $\lambda\uplabel 2 _{\ell,z} = \mu_z$
and the trivial equation (since $|A\uplabel 2_\ell|=1$).

The parenthetical claim in the statement of this lemma
and subsequent ones comes from Lemma~\ref{lem:veasy}.
\end{proof}

\begin{lem}
\label{lem:one}
Let $g:D^r \rightarrow \nnalgs$ be
a symmetric function
with arity $r\geq 3$.
Let $k$ be an integer in $\{3,\ldots,r\}$. Suppose
that $g$ is $(k-1)$-factoring and $(k-1)$-equational.
Either\/ $\eval(f\uplabel k)$ is \nump-hard
(which implies that\/ $\eval(g)$ is \nump-hard), or all the following
hold:
(i)~there are positive constants
$\{\lambda\uplabel k_{z}:
z\in D\}$
such that
$f\uplabel k(z_1,\ldots,z_k)
=\lambda\uplabel k_{z_1}\cdots\lambda\uplabel k_{z_k}\,R\uplabel k
(z_1,\ldots,z_k)$,
(ii)~for every connected component $\ell\in[m]$, the relation
$S\uplabel k_\ell$
is a Latin hypercube, and
(iii)~for every $\ell\in[m]$,
the sum
$
 \sum_{z\in\eqclass{\alpha}{k}}\lambda\uplabel k_{z}$
is
independent of $\alpha\in A\uplabel k_{\ell}$.
\end{lem}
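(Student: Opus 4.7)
The plan is to assume $\eval(f\uplabel{k})$ is not \numph and to derive (i)--(iii) by reducing to Theorem~\ref{BulGro} via the Gram-matrix function $\fstar\uplabel{k}$. First I would adapt the 2-stretch of Lemma~\ref{lem:twostretch} to the hypergraph setting: given an instance $I$ of $\eval(\fstar\uplabel{k})$ (a simple graph), I replace each edge $(u,v)$ of $I$ by a pair of $k$-uniform hyperedges $(u, w^e_2, \ldots, w^e_k)$ and $(v, w^e_2, \ldots, w^e_k)$ sharing $k-1$ fresh ``middle'' vertices. The resulting $k$-uniform hypergraph~$G$ is well-formed and satisfies $Z^{\fstar\uplabel{k}}(I) = Z^{f\uplabel{k}}(G)$, so $\eval(\fstar\uplabel{k})$ reduces in polynomial time to $\eval(f\uplabel{k})$ and is therefore also not \numph. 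Since $R\uplabel{1}$ is complete, each $z\in D$ witnesses $\Rstar\uplabel{k}(z,z)$, so no component of $\fstar\uplabel{k}$ is bipartite. Theorem~\ref{BulGro} then forces each component of $\fstar\uplabel{k}$ to have rank exactly~$1$ (rank at most~$1$ by Bulatov--Grohe, and rank at least~$1$ by the positive diagonal).

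The Gram matrix $\fstar\uplabel{k}$ has rank~$1$ on a component precisely when the rows of $f\uplabel{k}$ (indexed by $z_1$) restricted to that component are mutually proportional, which by the definition of $\sim_k$ exactly says the component lies inside a single $\sim_k$-class. Conversely, two $\sim_k$-equivalent elements share every $R\uplabel{k}$-neighbour, so they lie in the same $\Rstar\uplabel{k}$-component. Hence the components of $\Rstar\uplabel{k}$ are precisely the $\sim_k$-equivalence classes. Within each class $c$ this rank-$1$ condition yields positive constants $\{\mu_z : z\in c\}$ and a common direction vector $v_c$ satisfying $f\uplabel{k}(z_1, z_2, \ldots, z_k) = \mu_{z_1}\,v_c(z_2, \ldots, z_k)$ for $z_1\in c$. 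Iterating the symmetry of $f\uplabel{k}$ across each of the $k$ coordinates---at each stage using the same proportionality to pull out one $\mu$-factor---one deduces the product form
\[
   f\uplabel{k}_\ell(z_1, \ldots, z_k) \,=\, \mu_{z_1}\cdots\mu_{z_k}\,\omega(c_1, \ldots, c_k),
\]
for some symmetric $\omega:(A\uplabel{k}_\ell)^k\to\nnalgs$, where $c_i=\eqclass{z_i}{k}$.

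Property~(ii) now comes from ``orthogonality between components'': if $z, z'$ lie in distinct $\sim_k$-classes of the same $D_\ell$ then $\fstar\uplabel{k}(z,z')=0$, and substituting the product form yields $\omega(c,c_2,\ldots,c_k)\,\omega(c',c_2,\ldots,c_k)=0$ for every $(c_2,\ldots,c_k)$. Combined with the positivity of $f\uplabel{k-1}_\ell$ on $D_\ell^{k-1}$ (from the inductive hypothesis via Lemma~\ref{lem:simple}), this proves $S\uplabel{k}_\ell$ is a Latin hypercube. For (i) and (iii), I would sum $f\uplabel{k}$ over $z_k$ and equate with the product form of $f\uplabel{k-1}_\ell$: writing $M_c=\sum_{z\in c}\mu_z$ and letting $c_k^\ast$ denote the unique $(c_1,\ldots,c_{k-1})$-completion from~(ii), this yields
\[
   \omega(c_1, \ldots, c_{k-1}, c_k^\ast)\, M_{c_k^\ast} \,=\, \Phi(c_1) \cdots \Phi(c_{k-1})
\]
for some $\Phi:A\uplabel{k}_\ell\to\palgs$ (with $\Phi(c)=\lambda\uplabel{k-1}_{\ell,z}/\mu_z$ for any $z\in c$). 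Exchanging positions~$1$ and~$k$ in the symmetry of $\omega$ forces $\Phi(c)\,M_c = \Phi(c')\,M_{c'}$ whenever $c, c'$ appear together in an $S\uplabel{k}_\ell$-tuple; propagating along $R\uplabel{2}$-chains inside $D_\ell$ (each $R\uplabel{2}$-step corresponds to an $S\uplabel{k}_\ell$-tuple containing the two endpoint classes) then shows $\Phi(c)\,M_c$ is constantly equal to some $K_\ell\in\palgs$ throughout $A\uplabel{k}_\ell$. Setting $\alpha_c:=K_\ell^{(k-1)/k}/M_c\in\palgs$ makes $\omega(c_1,\ldots,c_k)=\alpha_{c_1}\cdots\alpha_{c_k}$ on its support; then $\lambda\uplabel{k}_z:=\mu_z\,\alpha_{\eqclass{z}{k}}$ establishes~(i), and $\sum_{z\in c}\lambda\uplabel{k}_z=\alpha_c\,M_c=K_\ell^{(k-1)/k}$ establishes~(iii).

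The main obstacle is the final paragraph, where the non-trivial combinatorial support $R\uplabel{k}$ of $f\uplabel{k}$ must be reconciled with the fact that $R\uplabel{k-1}$ is trivial (Lemma~\ref{lem:simple}). The Latin hypercube property emerges because summing out one coordinate must ``flatten'' everything to a complete product, and the symmetry-driven constancy $\Phi(c)M_c=K_\ell$ is precisely what allows $\omega$ to factor as a rank-$1$ tensor rather than merely being supported on a Latin hypercube. One must also verify that the $k$-th root defining $\alpha_c$ remains in $\nnalgs$, which follows since $\nnalgs$ is closed under radicals.
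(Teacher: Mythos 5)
Your proposal is correct and follows essentially the same route as the paper: reduce $\eval(\fstar\uplabel{k})$ to $\eval(f\uplabel{k})$ by a pair of hyperedges sharing fresh middle vertices, invoke Bulatov--Grohe (through Lemma~\ref{lem:twostretch}) to force each component of $\fstar\uplabel{k}$ to have rank~1, identify those components with $\sim_k$-classes to deduce orthogonality (the paper phrases this via Cauchy--Schwarz, but the content is the same), obtain the Latin hypercube property by combining orthogonality with the positivity of $f\uplabel{k-1}_\ell$ from Lemma~\ref{lem:simple}, and finally sum out one coordinate and exploit symmetry to show the class-sums are constant and to build the weights $\lambda\uplabel{k}_z$ (your $\mu_z$, $M_c$, $\Phi(c)$, $K_\ell$ correspond to the paper's ${\lambda'}_z\uplabel{k}$, $\bar\lambda_\alpha$, $\bar\mu_\alpha$ and $c_\ell$ up to normalization). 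The only stylistic difference is that the paper normalizes by picking a canonical representative in each class rather than introducing the direction vector $v_c$ and the function $\omega$, but the algebra is identical.
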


\begin{proof}
Assume $\eval(f\uplabel k)$ is not \numph.
Fix $\ell\in[m]$ and $z_1,z'_1\in D_\ell$.
By the Cauchy-Schwarz inequality,
\begin{align*}
& \Big(\sum_{z_2,\ldots,z_k\in D_{\ell}}f_{\ell}\uplabel k
(z_1,z_2,\ldots,z_k)f_{\ell}\uplabel k(z'_1,z_2,\ldots,z_k)\Big)^2
\leq \\
& \quad \sum_{z_2,\ldots,z_k\in D_{\ell}}f_{\ell}\uplabel
k(z_1,z_2,\ldots,z_k)^2
     \sum_{z_2,\ldots,z_k\in D_{\ell}}f_{\ell}\uplabel
     k(z'_1,z_2,\ldots,z_k)^2,
\end{align*}
i.e.,
\begin{equation}\label{eq:30}
\fstar\uplabel k_{\ell}(z_1,z'_1)^{2}\leq \fstar\uplabel k_{\ell}(z_1,z_1)
\fstar_{\ell}\uplabel k(z'_1,z'_1),
\end{equation}
with equality precisely when   $z_1 \sim_k z'_1$.
Note that the difference between the right-hand-side
and the left-hand-side in Equation~(\ref{eq:30}) can be seen as a $2$~by~$2$ determinant.

Now  $\eval(\fstar\uplabel k) \leq \eval(f\uplabel k)$
since $\fstar\uplabel k(u,v)$ can be simulated by a pair of
constraints
\[f\uplabel k(u,w_2,\ldots,w_k) f\uplabel k(v,w_2,\ldots,w_k)\]
using new variables $w_2,\ldots,w_k$,
so
$\eval(\fstar\uplabel k)$ is not
\nump-hard.
$\Rstar\uplabel k$ has no bipartite components since it is reflexive,
so by \cite{BulGro05} and Lemma~\ref{lem:twostretch},
$\fstar\uplabel k$ decomposes into
a sum of rank-1 blocks.

When
$z_1 \not\sim_k z'_1$
we have strict inequality
in~(\ref{eq:30}), which implies
\begin{equation}
    \label{eq:disjoint}
\fstar\uplabel k_{\ell}(z_1,z'_1) =
\sum_{z_{2},\ldots,z_k\in D}
f\uplabel{k}(z_1,z_2,\ldots,z_k)
f\uplabel{k}(z'_1,z_2,\ldots,z_k)=0,
\end{equation}
since otherwise $\fstar\uplabel k$ would not decompose into rank~1 blocks.

So for each choice
of canonical representatives
$\alpha_2,\ldots,\alpha_k$
in~$A_\ell\uplabel k$ there is
at most one representative $\alpha_1\in A_\ell\uplabel k$
such that ${f_\ell}\uplabel k(\alpha_1,\ldots,\alpha_k)>0$.
There is at least one such representative $\alpha_1$
since, by Lemma~\ref{lem:simple},
$$
f\uplabel{k-1}_{\ell}( \alpha_2,\ldots,\alpha_k ) =
   \lambda\uplabel{k-1}_{\ell,\alpha_2}\cdots\lambda\uplabel{k-1}_{\ell,\alpha_k}
,$$ and the $\lambda\uplabel{k-1}_{\ell,\alpha_j}$ values are positive.
This is part~(ii) of the lemma.

Recall the definition of ${\lambda'}_{z}\uplabel{k}$
from Equation~(\ref{eq:lambdaprime}).
For $\alpha\in A_\ell\uplabel k$,
let
$\bar\lambda_{\alpha}$ denote the sum $\bar\lambda_{\alpha}=
\sum_{z\in\eqclass{\alpha}{k}}
{\lambda'}_z\uplabel{k}
$. Similarly, let
$\bar\mu_{\alpha}=
\sum_{z\in\eqclass{\alpha}{k}}
{\lambda_{\ell,z}}\uplabel{k-1}$.
Fix $z_2,\ldots,z_k\in D_\ell$.
By Lemma~\ref{lem:simple},
\begin{align*}
 \lambda\uplabel{k-1}_{\ell,z_2}\cdots\lambda\uplabel{k-1}_{\ell,z_{k}}
= f_\ell\uplabel{k-1}(z_2,\ldots,z_k) & =
\sum_{z_1\in D_\ell} f_\ell \uplabel k(z_1,\ldots,z_k)\\
& =
\sum_{z_1\in D_\ell}
{\lambda'}_{z_1}\uplabel{k}
\cdots
{\lambda'}_{z_k}\uplabel{k}\,
f_\ell\uplabel{k} (\eqrep{z_1}{k},\ldots,\eqrep{z_k}{k})\\
& =
\bar\lambda_{\alpha_1}
{\lambda'}_{z_2}\uplabel{k}
\cdots
{\lambda'}_{z_k}\uplabel{k}\,
f_\ell\uplabel{k} (\alpha_1,\eqrep{z_2}{k},
\ldots,\eqrep{z_k}{k}),
\end{align*}
where $\alpha_1$ is the unique representative in $A_\ell\uplabel k$
such that $f\uplabel{k} (\alpha_1,\eqrep{z_2}{k},\ldots,\eqrep{z_k}{k})>0$.
So for fixed $\alpha_2,\ldots,\alpha_k\in A_\ell\uplabel k$,
there is a representative $\alpha_1 \in A_\ell \uplabel k$ such that
\begin{align*}
\bar\mu_{\alpha_2}\cdots\bar\mu_{\alpha_k} & =
\sum_{z_2\in\eqclass{\alpha_2}{k}}\cdots
\sum_{z_k\in\eqclass{\alpha_k}{k}}
 \lambda\uplabel{k-1}_{\ell,z_2}\cdots\lambda\uplabel{k-1}_{\ell,z_{k}}\\
& =
\sum_{z_2\in\eqclass{\alpha_2}{k}}\cdots
\sum_{z_k\in\eqclass{\alpha_k}{k}}
\bar\lambda_{\alpha_1}
{\lambda'}_{z_2}\uplabel{k}
\cdots
{\lambda'}_{z_k}\uplabel{k}
f_\ell\uplabel{k} (\alpha_1,
\ldots,\alpha_k)
\\
& =
\bar\lambda_{\alpha_1}
\cdots
\bar\lambda_{\alpha_k}
f_\ell\uplabel{k} (\alpha_1,
\ldots,\alpha_k).
\end{align*}

Since we have a Latin hypercube (Part (ii) of the lemma), any of
$\alpha_1,\ldots,\alpha_k$ is determined by the other $k-1$ of them.
Thus, we can derive a similar equality omitting any other $ \bar\mu_{\alpha_i}$
on the left-hand-side.
Now the right-hand-side of the above equality is symmetric
in the $\alpha_j$'s, and the
left-hand-side has exactly one $\alpha_j$ missing, so
by symmetry
we conclude $\bar\mu_{\alpha_1}=\cdots=\bar\mu_{\alpha_k}$
and, further, $\bar\mu_{\alpha_j}$ is constant for
$\alpha_j \in A_\ell\uplabel k$.
Moreover,
$\bar\lambda_{\alpha_1}
\cdots
\bar\lambda_{\alpha_k}\,
f_\ell\uplabel{k} (\alpha_1,
\ldots,\alpha_k)$
is constant on representatives
$\alpha_1,\ldots,\alpha_k \in A_\ell\uplabel k$
with
$f_\ell\uplabel{k} (\alpha_1,
\ldots,\alpha_k)>0$.
That is, for any set of $k$~representatives
$\alpha'_1, \alpha'_2, \ldots, \alpha'_k\in A_\ell\uplabel k$
with $f_\ell\uplabel{k} (\alpha'_1,
\ldots,\alpha'_k)>0$,
the value of that expression 
$\bar\lambda_{\alpha'_1}
\cdots
\bar\lambda_{\alpha'_k}
f_\ell\uplabel{k} (\alpha'_1,
\ldots,\alpha'_k)$
is the same.

Now define $\lambda_x\uplabel k =
 c_{\ell} {\lambda'}_{x}\uplabel k /\bar\lambda_{{\eqclass{x}{k}}}$,
where $c_{\ell}$ is a constant, depending only on~$\ell$,
to be determined below.
Then, whenever
$f_\ell\uplabel{k}(z_1,\ldots,z_k)>0$,
\begin{align*}
f_{\ell}\uplabel{k}(z_1,\ldots,z_k) &=
{\lambda'}_{z_1}\uplabel{k}
\cdots
{\lambda'}_{z_k}\uplabel{k}\,
f_\ell\uplabel{k} (\eqrep{z_1}{k},\ldots,\eqrep{z_k}{k})\\
    & =c_{\ell}^{-k}
\lambda_{z_1}\uplabel{k}
\cdots
\lambda_{z_k}\uplabel{k}
\bar\lambda_{\eqclass{z_1}{k}}
\cdots
\bar\lambda_{\eqclass{z_k}{k}}\,
f_{\ell}\uplabel{k}(\eqrep{z_1}{k},\ldots,\eqrep{z_k}{k}).
\end{align*}

But
$$c_{\ell}^{-k}
\bar\lambda_{\eqclass{z_1}{k}}
\cdots
\bar\lambda_{\eqclass{z_k}{k}}
f_{\ell}\uplabel{k}(\eqrep{z_1}{k},\ldots,\eqrep{z_k}{k})$$
is independent of $z_1,\ldots,z_k$ (assuming, as we are, that
$f_\ell\uplabel k(z_1,\ldots,z_k)>0$),
so, by appropriate choice of~$c_{\ell}$,
$$
f_{\ell}\uplabel{k}(z_1,\ldots,z_k) =
{\lambda_{z_1}}\uplabel{k}
\cdots
{\lambda_{z_k}}\uplabel{k}
R_{\ell}\uplabel{k}(z_1,\ldots,z_k).$$
The choice of component $D_{\ell}$ was arbitrary, so a similar
statement holds for $f\uplabel k$ over its whole range, as required by
part~(i)
of the lemma.

Finally,
$$
 \sum_{z\in\eqclass{\alpha}{k}}\lambda\uplabel k_{z}
=c_{\ell}\sum_{z\in\eqclass{\alpha}{k}}
  {\lambda'}_{z}\uplabel k /\bar\lambda_{\alpha}
 =c_{\ell},
$$
establishing part~(iii).
\end{proof}

\begin{lem}
\label{lem:two}
 Let $g:D^r \rightarrow \nnalgs$ be
a symmetric function
with arity $r\geq 3$.
Let $k$ be an integer in $\{3,\ldots,r\}$. Suppose
that $g$ is $(k-1)$-factoring and $(k-1)$-equational.
Suppose there are positive constants
$\{\lambda\uplabel k_{z}:
z\in D\}$   such that
$f\uplabel k(z_1,\ldots,z_k)
=\lambda\uplabel k_{z_1}\cdots\lambda\uplabel k_{z_k}\,
R\uplabel k (z_1,\ldots,z_k)$.
Either $\eval(f\uplabel k)$ is \nump-hard
(which implies that $\eval(g)$ is \nump-hard), or, for every
$\ell\in[m]$,
the multiset
$\{\lambda\uplabel k_{z}:
z\in \eqclass{\alpha}{k}\}$
is independent of the choice of $\alpha\in A_\ell\uplabel k$.
\end{lem}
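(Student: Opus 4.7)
Assume $\eval(f\uplabel k)$ is not \numph, and fix $\ell\in[m]$. Let $q_p(\alpha):=\sum_{z\in\eqclass{\alpha}{k}}(\lambda\uplabel k_z)^p$ denote the $p$-th power sum of the multiset in question; by Newton's identities, the multiset $\{\lambda\uplabel k_z:z\in\eqclass{\alpha}{k}\}$ is determined by the sequence $(q_p(\alpha))_{p\ge 1}$, so it suffices to show that, for every $p\ge 1$, the quantity $q_p(\alpha)$ is independent of $\alpha\in A_\ell\uplabel k$. For each $p\ge 1$ the plan is to construct a symmetric binary gadget $\Gamma_p$, set $h_p(z,z'):=Z^{f\uplabel k}(\Gamma_p;u=z,v=z')$ (which is polynomial-time reducible to $f\uplabel k$), and apply the Bulatov--Grohe dichotomy through Lemma~\ref{lem:twostretch}. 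Specifically, $\Gamma_p$ consists of one ``main'' hyperedge $e=(u,v,w_3,\ldots,w_k)$ with open vertices $u,v$, together with $p-1$ pendant hyperedges at each $w_j$ ($3\le j\le k$), each consisting of $w_j$ and $k-1$ fresh private vertices.

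A preliminary calculation from Lemma~\ref{lem:one}(i,iii) shows $f\uplabel 1(w)=C_\ell\,\lambda\uplabel k_w$ for $w\in D_\ell$ and some positive constant $C_\ell$; substituting into $\Gamma_p$ yields, for $z,z'\in D_\ell$,
\[h_p(z,z')=C_\ell^{(k-2)(p-1)}\,\lambda\uplabel k_z\,\lambda\uplabel k_{z'}\,F_\ell\bigl(\eqclass z k,\eqclass{z'}{k}\bigr),\]
where
\[F_\ell(\alpha,\alpha')=\sum_{\alpha_3,\ldots,\alpha_{k-1}\in A_\ell\uplabel k}q_p(\alpha_3)\cdots q_p(\alpha_{k-1})\,q_p\bigl(\alpha_k^*(\alpha,\alpha',\alpha_3,\ldots,\alpha_{k-1})\bigr)\]
and $\alpha_k^*$ is the unique completion from the Latin hypercube $S_\ell\uplabel k$ (Lemma~\ref{lem:one}(ii)). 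Since $F_\ell(\alpha,\alpha)>0$, $h_p$ has no bipartite components, so Lemma~\ref{lem:twostretch} forces $h_p|_{D_\ell}$ to have matrix rank $1$. Factoring $h_p|_{D_\ell}=C_\ell^{(k-2)(p-1)}\,\Lambda F_\ell\Lambda^{\top}$ with $\Lambda_{z,\alpha}=\lambda\uplabel k_z$ when $\eqclass z k=\alpha$ and $0$ otherwise --- so $\Lambda$ has full column rank $|A_\ell\uplabel k|$, its columns having pairwise disjoint supports --- we read off that $F_\ell$ itself has rank $1$. A second use of the Latin hypercube (the map $\alpha'\mapsto\alpha_k^*$ is a bijection on $A_\ell\uplabel k$ with the other coordinates fixed) makes every row sum of $F_\ell$ equal to $\bigl(\sum_\alpha q_p(\alpha)\bigr)^{k-2}$; rank $1$ together with constant row sums forces $F_\ell$ to be a constant matrix on $A_\ell\uplabel k\times A_\ell\uplabel k$.

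For $k=3$ the sum defining $F_\ell$ collapses to the single term $q_p(\alpha_3^*(\alpha,\alpha'))$, and since $\alpha_3^*$ surjects onto $A_\ell\uplabel 3$, constancy of $F_\ell$ immediately yields $q_p\equiv\mathrm{const}$ on $A_\ell\uplabel 3$, as required. For $k>3$ the plan is to run the same construction with independently varying pendant exponents $p_3,\ldots,p_{k-1},p_k$; each such choice yields a constant matrix
\[F_\ell^{(p_3,\ldots,p_k)}(\alpha,\alpha')=\sum_{\alpha_k\in A_\ell\uplabel k}q_{p_k}(\alpha_k)\,H^{(p_3,\ldots,p_{k-1})}(\alpha,\alpha',\alpha_k),\]
and Corollary~\ref{cor:interp}, applied with the interpolating constants $\lambda\uplabel k_z$, peels off the contribution of each $\alpha_k$ separately, reducing the problem to the $k=3$ case. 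The principal technical obstacle is this interpolation step: separating the contributions of distinct equivalence classes when the values $\lambda\uplabel k_z$ are not a priori known to be pairwise distinct. The workaround is to group together those $z$'s sharing a common value of $\lambda\uplabel k_z$ and run the interpolation only up to that grouping, which is exactly the resolution needed to certify equality of the multisets $M_\alpha$.
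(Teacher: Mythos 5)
Your approach is genuinely different from the paper's. The paper constructs a new \emph{$k$-ary} function $h\uplabel j(z_1,\dots,z_k)=\psi_{z_1}\cdots\psi_{z_k}R\uplabel k(\dots)$ with $\psi_z\propto(\lambda\uplabel k_z)^j$, reduces $\eval(h\uplabel j)\leq\eval(f\uplabel k)$ by attaching $(j-1)d_v$ pendant $k$-ary edges at each vertex $v$, and then invokes Lemma~\ref{lem:one}(iii) (applied to $h\uplabel j$) to read off that the power sums $\sum_{z\in\eqclass\alpha k}(\lambda\uplabel k_z)^j$ are $\alpha$-independent. You instead build a \emph{binary} gadget, invoke Bulatov--Grohe through Lemma~\ref{lem:twostretch}, peel off the full-column-rank factor $\Lambda$, and argue that $F_\ell$ is rank~1 with constant row sums, hence constant. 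For $k=3$ your argument is complete and correct: $F_\ell(\alpha,\alpha')=q_p(\alpha_3^*(\alpha,\alpha'))$, $\alpha_3^*$ is surjective, so $q_p$ is constant and Newton's identities finish.

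For $k>3$, however, there is a genuine gap. You have only established that
\[
F_\ell^{(p_3,\dots,p_k)}(\alpha,\alpha')=\sum_{\alpha_3,\dots,\alpha_{k-1}}q_{p_3}(\alpha_3)\cdots q_{p_{k-1}}(\alpha_{k-1})\,q_{p_k}\bigl(\alpha_k^*(\alpha,\alpha',\alpha_3,\dots,\alpha_{k-1})\bigr)
\]
is constant in $(\alpha,\alpha')$, and it is not at all immediate that this forces each $q_p$ to be constant. If you already knew that $S_\ell\uplabel k$ were defined by an Abelian group equation, then (for all $p_j=p$) $F_\ell$ would be a $(k-2)$-fold convolution of $q_p$ and a Fourier argument would close the gap; but that group structure is exactly what Lemma~\ref{lem:three} establishes \emph{later}, and cannot be used here without circularity. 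All you know at this point is that $S_\ell\uplabel k$ is a symmetric Latin hypercube. Your proposed ``interpolation'' workaround does not repair this: Corollary~\ref{cor:interp} is an algorithmic device for polynomial-time reductions, not a tool for separating the terms of a fixed identity, and even read as a Vandermonde-type argument it can only distinguish $z$'s with \emph{distinct} $\lambda\uplabel k_z$ values, which is not the same as distinguishing the equivalence classes $\eqclass\alpha k$ (two distinct classes can certainly contain $z$'s with a common $\lambda$-value). Your final sentence gestures at this difficulty but does not resolve it. You would need either an unconditional argument that ``$F_\ell$ constant for all exponent vectors $(p_3,\dots,p_k)$ implies $q_p$ constant'' for an arbitrary symmetric Latin hypercube, or a different gadget; as written the $k>3$ case does not go through.
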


\begin{proof}
In preparation for the proof, consider
the unary constraint $U(x)$ applied to a variable~$x$
and defined as follows: Take $k-1$ new variables $x_2,\ldots,x_k$
then add the constraint $f\uplabel k(x,x_2,\ldots,x_k)$.
The resulting unary relation $U(x)$  will be used in
the reduction that follows.
For any
$\ell\in[m]$ and $\alpha\in A_{\ell}\uplabel k$,
let $n_{\ell}=|A_{\ell}\uplabel k|$
and
$c_{\ell}=\sum_{z\in\eqclass{\alpha}{k}}\lambda_{z}\uplabel k$
(which, by Lemma~\ref{lem:one},
is independent of the choice of $\alpha\in A_{\ell}\uplabel k$).
For any $z_1\in D_{\ell}$,
\begin{align*}
U(z_1)& =\sum_{z_2,\ldots,z_k\in D_{\ell}}f_{\ell}\uplabel k(z_1,\ldots,z_k)
   =\sum_{z_2,\ldots,z_k\in D_{\ell}}
\lambda\uplabel k_{z_1}\cdots\lambda\uplabel k_{z_k}\,
   R_\ell\uplabel k (z_1,\ldots,z_k) \\
   &=
\sum_{\alpha_2,\ldots,\alpha_k\in A_\ell\uplabel k:
   (\eqrep{z_1}{k},\alpha_2,\ldots,\alpha_k)\in R_\ell \uplabel k}
 \,\,
\sum_{z_2\in\eqclass{\alpha_2}{k},\ldots,z_k\in\eqclass{\alpha_k}{k}}
\lambda\uplabel k_{z_1}\cdots\lambda\uplabel k_{z_k}
      \\
   &=
\lambda\uplabel k_{z_1}
\sum_{\alpha_2,\ldots,\alpha_k\in A_\ell\uplabel k:
   (\eqrep{z_{1}}{k},\alpha_2,\ldots,\alpha_k)\in R_\ell \uplabel k}
 \,\,
\bigg(
\sum_{z_2\in\eqclass{\alpha_2}{k}}
\lambda\uplabel k_{z_2}
\bigg)
\cdots
\bigg(
\sum_{z_k\in\eqclass{\alpha_k}{k}}
\lambda\uplabel k_{z_k}
\bigg)
 \\
   &=\lambda\uplabel k_{z_1} n_{\ell}^{k-2}c_{\ell}^{k-1},
\end{align*}
where the final equality uses part~(ii) of Lemma~\ref{lem:one}.

The idea of the proof is to  use $U$ to ``power up'' vertex weights
$\lambda_{z}\uplabel k$.  In this way we discover that not only is
$\sum_{z\in\eqclass{\alpha}{k}}\lambda_{z}\uplabel k$
independent of $\alpha\in A_{\ell}\uplabel k$,
but so also is
$\sum_{z\in\eqclass{\alpha}{k}}(\lambda_{z}\uplabel k)^j$
for any positive integer~$j$.
This implies that the multiset of weights on an equivalence
class~$\eqclass{\alpha}{k}$ is independent of $\alpha\in A_{\ell}\uplabel k$.

For $z_1,\ldots,z_k\in D_{\ell}$ and $j\geq1$, define
$$
\psi_{z_1} =( \lambda\uplabel k_{z_1} n_{\ell}^{k-2}c_{\ell}^{k-1}
 )^{j-1} \lambda\uplabel k_{z_1}
$$
and
$$
h_\ell\uplabel j(z_1,\ldots,z_k) =
 \psi_{z_1}\cdots\psi_{z_k} R_\ell\uplabel k (z_1,\ldots,z_k).
$$
Let $h\uplabel j = h_1\uplabel j \oplus \cdots \oplus h_m\uplabel j$.
We will give a reduction from $\eval(h\uplabel j)$ to 
$\eval(f\uplabel k)$.
Suppose $G=(V,E)$
is a $k$-uniform hypergraph (an input to $\eval(h\uplabel j)$).
For $j\geq 1$,
the hypergraph $G \uplabel j$
is obtained from $G$ as follows:  for each vertex~$v$ in~$G$
of degree $d_{v}$, add $(k-1)(j-1)d_{v}$ new vertices and $(j-1)d_{v}$ new
edges, each one incident at $v$ and at $k-1$ of the new vertices.
Then
\begin{align*}
Z^{h\uplabel j_\ell}(G)&=\sum_{\sigma:V\to D_{\ell}}\,
   \prod_{(u_1,\ldots,u_k)\in E}h\uplabel
   j_{\ell}(\sigma(u_1),\ldots,\sigma(u_k))\\
&=\sum_{\sigma:V\to D_{\ell}}\,
   \prod_{(u_1,\ldots,u_k)\in E}
 \psi_{\sigma(u_1)}\cdots\psi_{\sigma(u_k)}\,
    R_{\ell}\uplabel k(\sigma(u_1),\ldots,\sigma(u_k))\\
&=\sum_{\sigma:V\to D_{\ell}}\,
   \prod_{v\in V} (
\lambda\uplabel k_{\sigma(v)} n_{\ell}^{k-2}c_{\ell}^{k-1}
  )^{(j-1)d_{v}}\!\!
   \prod_{(u_1,\ldots,u_k)\in E}
\lambda\uplabel k_{\sigma(u_1)}\cdots \lambda \uplabel k_{\sigma(u_k)}\,
    R_{\ell}\uplabel k(\sigma(u_1),\ldots,\sigma(u_k))\\
&=\sum_{\sigma:V\to D_{\ell}}\,
   \prod_{v\in V}(
\lambda\uplabel k_{\sigma(v)} n_{\ell}^{k-2}c_{\ell}^{k-1}
  )^{(j-1)d_{v}}\!\!
  \prod_{(u_1,\ldots,u_k)\in E}
    f_{\ell}\uplabel k(\sigma(u_1),\ldots,\sigma(u_k))\\
 &=Z^{f\uplabel k_\ell}({G \uplabel j}).
\end{align*}

Thus (for connected $G$)
$$
Z^{h\uplabel j}(G)=
  \sum_{\ell\in[m]}Z^{h\uplabel j_\ell}(G)=
  \sum_{\ell\in[m]}Z^{f\uplabel k_\ell}(G \uplabel j)=
  Z^{f \uplabel k}(G \uplabel j),
$$
so $\eval(h\uplabel j)\leq\eval(f\uplabel k)$.

Assume $\eval(f\uplabel k)$ is not \nump-hard.  Then $\eval(h \uplabel j)$ is
not
\nump-hard
for any $j\geq1$.
Recall from the statement of the lemma that $g$ is $(k-1)$-factoring and $(k-1)$-equational.
Then
from Lemma~\ref{lem:one} part (iii),
$$
\sum_{z\in\eqclass{\alpha}{k}}\psi_{z}
   =(n_{\ell}^{k-2}c_{\ell}^{k-1})^{j-1}
   \sum_{z\in\eqclass{\alpha}{k}}(\lambda_{z}\uplabel k)^j
$$
is independent of $\alpha\in A_{\ell}\uplabel k$ for all $j\geq1$.
This can only occur if the multiset $\{\lambda_{z}\uplabel
k:z\in\eqclass{\alpha}{k}\}$ is
independent of $\alpha\in A_{\ell}\uplabel k$.
\end{proof}

We will use the following corollary of Lemmas~\ref{lem:base},
\ref{lem:one} and~\ref{lem:two}.
\begin{cor}\label{cor:cor}
Let $g:D^r \rightarrow \nnalgs$ be
a symmetric function
with arity $r\geq 3$.
Let $k$ be an integer in $\{3,\ldots,r\}$. Suppose
that $g$ is $(k-1)$-factoring and $(k-1)$-equational.
Either $\eval(f\uplabel k)$ is \nump-hard
(which implies that $\eval(g)$ is \nump-hard), or
$g$ is $k$-factoring.
\end{cor}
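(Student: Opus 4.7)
The plan is to derive both conditions of $k$-factoring as a direct consequence of Lemmas~\ref{lem:one} and~\ref{lem:two}. Suppose $\eval(f\uplabel k)$ is not \numph; otherwise the first alternative holds and there is nothing to prove.

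First I would invoke Lemma~\ref{lem:one}, which (since $g$ is $(k-1)$-factoring and $(k-1)$-equational by hypothesis) yields positive constants $\{\lambda\uplabel k_z:z\in D\}$ with
\[
  f\uplabel k(z_1,\ldots,z_k)=\lambda\uplabel k_{z_1}\cdots\lambda\uplabel k_{z_k}\,R\uplabel k(z_1,\ldots,z_k),
\]
together with the Latin-hypercube property of each $S\uplabel k_\ell$ and the constancy of $\sum_{z\in\eqclass{\alpha}{k}}\lambda\uplabel k_z$ as $\alpha$ ranges over $A\uplabel k_\ell$. I would then apply Lemma~\ref{lem:two} to upgrade this to the stronger statement that the whole \emph{multiset} $\{\lambda\uplabel k_z:z\in\eqclass{\alpha}{k}\}$ is independent of the choice of $\alpha\in A\uplabel k_\ell$.

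Once we know this common multiset, I would define $s\uplabel k_\ell:=|\eqclass{\alpha}{k}|$ for any (hence every) $\alpha\in A\uplabel k_\ell$, and fix an enumeration of the multiset by $[s\uplabel k_\ell]$; call the resulting weights $\lambda\uplabel k_{\ell,1},\ldots,\lambda\uplabel k_{\ell,s\uplabel k_\ell}$. For each equivalence class $\eqclass{\alpha}{k}\subseteq D_\ell$, choose a bijection $[s\uplabel k_\ell]\to\eqclass{\alpha}{k}$, $i\mapsto (\alpha,i)$, compatible with the weights (so that $\lambda\uplabel k_{(\alpha,i)}=\lambda\uplabel k_{\ell,i}$). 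Collecting these bijections over all $\alpha\in A\uplabel k_\ell$ yields the desired identification $D_\ell\cong A\uplabel k_\ell\times[s\uplabel k_\ell]$, which is the first clause of $k$-factoring.

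It remains to verify the second clause. Since $R\uplabel k$ is consistent with $\sim_k$ (so $R\uplabel k(z_1,\ldots,z_k)=S\uplabel k_\ell(\eqrep{z_1}{k},\ldots,\eqrep{z_k}{k})$ whenever the $z_j$ lie in $D_\ell$), Lemma~\ref{lem:one}(i) rewrites in the new coordinates as
\[
  f\uplabel k_\ell\bigl((\alpha_1,i_1),\ldots,(\alpha_k,i_k)\bigr)=\lambda\uplabel k_{\ell,i_1}\cdots\lambda\uplabel k_{\ell,i_k}\,S\uplabel k_\ell(\alpha_1,\ldots,\alpha_k),
\]
which is exactly what is required. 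I do not expect any real obstacle: the content of the corollary is essentially bookkeeping on top of Lemmas~\ref{lem:one} and~\ref{lem:two}. The only mildly delicate point is ensuring that the bijections $\eqclass{\alpha}{k}\leftrightarrow[s\uplabel k_\ell]$ can be chosen so that a single tuple of constants $\lambda\uplabel k_{\ell,i}$ serves for every equivalence class simultaneously, and this is exactly what Lemma~\ref{lem:two} guarantees via the multiset equality.
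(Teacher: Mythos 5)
Your proposal is correct and follows essentially the same route as the paper: invoke Lemma~\ref{lem:one}(i) to obtain the positive constants $\lambda\uplabel k_z$ (and the Latin-hypercube property), apply Lemma~\ref{lem:two} to get that the multiset $\{\lambda\uplabel k_z : z\in\eqclass{\alpha}{k}\}$ is independent of $\alpha\in A_\ell\uplabel k$, then set $s_\ell\uplabel k$ to be the common multiset size and choose weight-compatible bijections to identify $D_\ell\cong A_\ell\uplabel k\times[s_\ell\uplabel k]$. The remaining verification of condition~(2) is the same bookkeeping the paper performs.
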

\begin{proof}
By Lemma~\ref{lem:one}
part (i)
there are positive constants
$\{\lambda\uplabel k_{z}:
z\in D\}$
such that
$$f\uplabel k(z_1,\ldots,z_k)
=\lambda\uplabel k_{z_1}\cdots\lambda\uplabel k_{z_k} R\uplabel k
(z_1,\ldots,z_k).$$
Fix any $\ell\in[m]$. By Lemma~\ref{lem:two},
the multiset
$\{\lambda\uplabel k_{z}:
z\in \eqclass{\alpha}{k}\}$
is independent of the choice of $\alpha\in A_\ell\uplabel k$
Let $s_\ell\uplabel k$ be the size of this multiset.
Then $D_{\ell}\cong A_{\ell}\uplabel k\times [s_{\ell}\uplabel k]$ giving
condition (1) in
the definition of $k$-factoring.
Also, if the element $z\in D_\ell$ corresponds to the $i$'th element of
the $\sim_k$ class $\eqclass{z}{k}$  then the value $\lambda\uplabel k_z$ just
depends upon $i$ (and on $\ell$) --- it is independent of
the equivalence class~$\eqclass{z}{k}$.
We denote this value as $\lambda_{\ell,i}\uplabel k$.
Thus, for
 $\alpha_1,\ldots,\alpha_k \in A_\ell\uplabel k$ and
$i_1,\ldots,i_k\in [s_\ell\uplabel k]$,
$$
f\uplabel{k}_{\ell}( (\alpha_1,i_1),\ldots,(\alpha_k,i_k) ) =
   \lambda\uplabel{k}_{\ell,i_1}\cdots\lambda\uplabel{k}_{\ell,i_k}
R\uplabel{k}_\ell(\alpha_1,\ldots,\alpha_k),
$$
giving condition (2) in the definition of $k$-factoring.
\end{proof}

\begin{lem} \label{cor:one}
Let $g:D^r \rightarrow \nnalgs$ be
a symmetric function
with arity $r\geq 3$.
Let $k$ be an integer in $\{3,\ldots,r\}$. Suppose
that $g$ is
$k$-factoring.
Then, for every $\ell\in[m]$,
$$
Z^{f_\ell\uplabel k}(G)=\Lambda_{\ell}\uplabel k(G) \,\,Z^{S_\ell\uplabel
k}(G),
$$
 where
\begin{equation}\label{eq:90}
\Lambda_{\ell}\uplabel k(G)=
\prod_{v\in V(G)}\,
   \sum_{i\in[s_{\ell}\uplabel k]}(\lambda_{\ell,i}\uplabel k)^{d_{v}}.
\end{equation}
\end{lem}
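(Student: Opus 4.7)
The plan is to expand $Z^{f_\ell\uplabel k}(G)$ directly using the $k$-factoring hypothesis, then split the resulting sum into a product of an $[s_\ell\uplabel k]$-part (the $\lambda$'s) and an $A_\ell\uplabel k$-part (the relation $S_\ell\uplabel k$).

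First, since $g$ is $k$-factoring we have $D_\ell \cong A_\ell\uplabel k \times [s_\ell\uplabel k]$, so every assignment $\sigma:V(G)\to D_\ell$ can be written uniquely as a pair $(\tau,\iota)$ with $\tau:V(G)\to A_\ell\uplabel k$ and $\iota:V(G)\to [s_\ell\uplabel k]$. Using the definition of $k$-factoring,
\[
f\uplabel k_\ell(\sigma(u_1),\ldots,\sigma(u_k)) = \lambda\uplabel k_{\ell,\iota(u_1)}\cdots\lambda\uplabel k_{\ell,\iota(u_k)}\,S\uplabel k_\ell(\tau(u_1),\ldots,\tau(u_k))
\]
for every edge $(u_1,\ldots,u_k)\in E(G)$.

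Next, I would take the product over edges and regroup the $\lambda$-factors by vertex. Each vertex $v$ appears in exactly $d_v$ edges, contributing the factor $\lambda\uplabel k_{\ell,\iota(v)}$ once per incidence, so
\[
\prod_{(u_1,\ldots,u_k)\in E(G)}\lambda\uplabel k_{\ell,\iota(u_1)}\cdots\lambda\uplabel k_{\ell,\iota(u_k)} = \prod_{v\in V(G)}(\lambda\uplabel k_{\ell,\iota(v)})^{d_v}.
\]
Because the $\lambda$-factors depend only on $\iota$ and the $S_\ell\uplabel k$-factors only on $\tau$, the sum over $\sigma$ splits:
\[
Z^{f\uplabel k_\ell}(G)=\Biggl(\sum_{\iota:V(G)\to[s_\ell\uplabel k]}\,\prod_{v\in V(G)}(\lambda\uplabel k_{\ell,\iota(v)})^{d_v}\Biggr)\Biggl(\sum_{\tau:V(G)\to A_\ell\uplabel k}\,\prod_{e\in E(G)}S\uplabel k_\ell(\tau(e))\Biggr).
\]

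Finally, the $\iota$-sum factorises over vertices since each vertex contributes independently, yielding exactly $\Lambda_\ell\uplabel k(G)$ as defined in~\eqref{eq:90}. The $\tau$-sum is, by definition, $Z^{S_\ell\uplabel k}(G)$. Combining these gives the claimed identity. There is no real obstacle in this argument: it is essentially a bookkeeping computation, relying only on the multiplicative structure guaranteed by $k$-factoring and on the fact that each occurrence of a vertex in an edge contributes one copy of its $\lambda$-weight.
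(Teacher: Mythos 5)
Your proof is correct and follows essentially the same route as the paper's: decompose each assignment into an $A_\ell\uplabel k$-part and an $[s_\ell\uplabel k]$-part using $D_\ell\cong A_\ell\uplabel k\times[s_\ell\uplabel k]$, apply the $k$-factoring identity, regroup the $\lambda$-factors by vertex using degrees, and split the sum into a product.
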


\begin{proof}

For $G=(V,E)$,
\begin{align*}
Z^{f_\ell\uplabel k}(G)&=
   \sum_{\sigma:V\to A_{\ell}\uplabel k,\tau:V\to [s_\ell\uplabel k]}\,
   \prod_{(u_1,\ldots,u_k)\in E}f_{\ell}\uplabel
   k((\sigma(u_1),\tau(u_1)),\ldots,(\sigma(u_k),\tau(u_k)))\\
&=\sum_{\sigma:V\to A_{\ell}\uplabel k,\tau:V\to[s_\ell\uplabel k]}\,
   \prod_{(u_1,\ldots,u_k)\in E}
   \lambda_{\ell,\tau(u_1)}\uplabel k
   \cdots
   \lambda_{\ell,\tau(u_k)}\uplabel k\,
   S_{\ell}\uplabel k(\sigma(u_1),\ldots,\sigma(u_k))\\
&=\sum_{\sigma:V\to A_{\ell}\uplabel k}
\bigg(
 \prod_{(u_1,\ldots,u_k)\in E}
S_{\ell}\uplabel k(\sigma(u_1),\ldots,\sigma(u_k))
\bigg)
\bigg(
 \sum_{\tau:V\to[s_\ell\uplabel k]}\,
   \prod_{v\in V}
   \big(\lambda_{\ell,\tau(v)}\uplabel k\big)^{d_v}
   \bigg) \\
&= Z^{S_\ell\uplabel k}(G)\,\,\Lambda_{\ell}\uplabel k(G).
\end{align*}

\end{proof}

\begin{lem} \label{cor:two}
Let $g:D^r \rightarrow \nnalgs$ be
a symmetric function
with arity $r\geq 3$.
Let $k$ be an integer in $\{3,\ldots,r\}$. Suppose
that $g$ is $(k-1)$-factoring and $(k-1)$-equational.
Either $\eval(f\uplabel k)$ is \nump-hard
(which implies that $\eval(g)$ is \nump-hard), or
$\eval(S\uplabel k) \leq \eval(f\uplabel k)$.
\end{lem}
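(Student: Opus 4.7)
The plan is as follows.  By Corollary~\ref{cor:cor}, we may assume that $g$ is $k$-factoring; otherwise $\eval(f\uplabel k)$ is \nump-hard and the conclusion of the lemma holds trivially.  Lemma~\ref{cor:one} then gives the explicit identity $Z^{f_\ell\uplabel k}(G') = \Lambda_\ell\uplabel k(G')\,Z^{S_\ell\uplabel k}(G')$ for every hypergraph~$G'$, with $\Lambda_\ell\uplabel k(G')$ computable in polynomial time from $G'$ and the weights of~$g$.  Since we may assume our input hypergraph $G$ is connected, $Z^{S\uplabel k}(G) = \sum_\ell Z^{S_\ell\uplabel k}(G)$, so it is enough to recover each of the values $Z^{f_\ell\uplabel k}(G)$ separately by querying the $\eval(f\uplabel k)$ oracle.

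To unmix $Z^{f\uplabel k}(G) = \sum_\ell Z^{f_\ell\uplabel k}(G)$ I plan to use polynomial interpolation based on a satellite construction like that in the proof of Lemma~\ref{lem:two}.  Fix any vertex $v^* \in V(G)$ and, for $p = 0,1,\ldots,P$, let $G_p$ be obtained from $G$ by attaching $p$ pendant hyperedges at $v^*$, each consisting of $v^*$ together with $k-1$ fresh vertices (all distinct across satellites, so $G_p$ is a valid simple hypergraph).  Summing out the fresh vertices turns each satellite into the unary weight $U(\tau(v^*)) = \lambda_{\tau(v^*)}\uplabel k\,\beta_\ell$, where $\beta_\ell = n_\ell^{k-2} c_\ell^{k-1}$ with $n_\ell = |A_\ell\uplabel k|$ and $c_\ell = \sum_i \lambda_{\ell,i}\uplabel k$.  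A straightforward calculation using $k$-factoring then yields
\[
Z^{f\uplabel k}(G_p) \;=\; \sum_{\ell \in [m]}\,\sum_{i\in[s_\ell\uplabel k]}\, c_{\ell,i}\,\bigl(\beta_\ell\,\lambda_{\ell,i}\uplabel k\bigr)^{p},
\]
where $c_{\ell,i} = Z^{S_\ell\uplabel k}(G)\,(\lambda_{\ell,i}\uplabel k)^{d_{v^*}}\,\prod_{v\ne v^*}\sum_{i'}(\lambda_{\ell,i'}\uplabel k)^{d_v}$.  Crucially, the number of distinct bases $\beta_\ell\lambda_{\ell,i}\uplabel k$ is at most $\sum_\ell s_\ell\uplabel k$, a \emph{constant} depending only on~$g$ and independent of $|V(G)|$.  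Taking $P$ equal to this constant and applying Lemma~\ref{lem:interp} (after grouping coincident bases) therefore recovers, from a constant number of oracle calls, the sum of coefficients $c_{\ell,i}$ attached to each distinct base.

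The main obstacle is the possibility of base coincidences across components: if $\beta_\ell\lambda_{\ell,i}\uplabel k = \beta_{\ell'}\lambda_{\ell',i'}\uplabel k$ with $\ell\ne\ell'$, the interpolation only yields linear combinations that mix several unknowns $Z^{S_\ell\uplabel k}(G)$.  To disambiguate, I would repeat the construction for several choices of $v^*$: the coefficients $c_{\ell,i}$ depend on $v^*$ through the factors $(\lambda_{\ell,i}\uplabel k)^{d_{v^*}}$ and $\prod_{v\ne v^*}\sum_{i'}(\lambda_{\ell,i'}\uplabel k)^{d_v}$, whereas the bases themselves are unaffected.  In the exceptional case where no single-vertex variation helps (for instance when $G$ is regular), one attaches satellites at two or more distinguished vertices simultaneously and runs a multi-parameter interpolation.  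In either case the equations collected form a linear system of polynomially many constraints in the at most $m$ unknowns $Z^{S_\ell\uplabel k}(G)$, whose solvability reduces to a Vandermonde-type argument that exploits the positivity and distinctness of the values $\lambda_{\ell,i}\uplabel k$.  Solving this system and summing its solutions delivers $Z^{S\uplabel k}(G)$ in polynomial time, completing the required reduction $\eval(S\uplabel k) \le \eval(f\uplabel k)$.
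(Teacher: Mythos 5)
Your construction is different from the paper's and has a genuine gap in the ``disambiguation'' step.

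The paper's proof builds $G\uplabel p$ from $p$ disjoint copies of $G$ tied together by equality-forcing gadgets (using the Latin hypercube property), arranged so that $Z^{f\uplabel k}(G\uplabel p) = \sum_{\ell}\gamma_\ell\eta_\ell^p$ with the coefficient $\gamma_\ell$ \emph{exactly equal} to $Z^{S_\ell\uplabel k}(G)$. It then invokes Corollary~\ref{cor:interp} --- which, unlike Lemma~\ref{lem:interp}, is explicitly stated for possibly \emph{coincident} bases $\eta_\ell$ --- to recover $\sum_\ell\gamma_\ell = Z^{S\uplabel k}(G)$ directly. No per-component unmixing is needed, and the whole difficulty you wrestle with simply disappears.

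In your version the coefficient attached to the base $\beta_\ell\lambda_{\ell,i}\uplabel k$ is
$c_{\ell,i} = Z^{S_\ell\uplabel k}(G)\,(\lambda_{\ell,i}\uplabel k)^{d_{v^*}}\prod_{v\ne v^*}\sum_{i'}(\lambda_{\ell,i'}\uplabel k)^{d_v}$,
which carries $G$- and $\ell$-dependent prefactors, so $\sum_{\ell,i}c_{\ell,i}\ne Z^{S\uplabel k}(G)$. You therefore do need to recover the individual $Z^{S_\ell\uplabel k}(G)$, or at least unmix coincident bases, and that is where the proof breaks. If $G$ is regular, varying $v^*$ changes nothing, and attaching satellites at several vertices does not help either: for instance, take $k=3$, $s_1=s_2=1$, $n_1=1$, $\lambda_{1,1}\uplabel 3=8$, $n_2=8$, $\lambda_{2,1}\uplabel 3=4$, so that $\beta_1\lambda_{1,1}\uplabel 3 = 1^{1}\cdot 8^{2}\cdot 8 = 512 = 8^{1}\cdot 4^{2}\cdot 4 = \beta_2\lambda_{2,1}\uplabel 3$. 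The bases coincide, and coincide in every extra parameter you introduce (every satellite yields the same base again), yet the prefactors $(\lambda_{\ell,1}\uplabel 3)^{\sum_v d_v}$ are $8^{kM}$ versus $4^{kM}$. You thus only learn $8^{kM}Z^{S_1\uplabel k}(G)+4^{kM}Z^{S_2\uplabel k}(G)$, from which $Z^{S_1\uplabel k}(G)+Z^{S_2\uplabel k}(G)$ cannot be extracted. The ``Vandermonde-type argument exploiting positivity and distinctness'' is not available because the $\lambda_{\ell,i}\uplabel k$ are \emph{not} guaranteed distinct across components, and in the worst case no amount of single- or multi-vertex pendants separates them. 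The fix is to make the gadget act on all of $G$ (as in the paper), so that the interpolation coefficient is the quantity you actually want to sum; then you only ever need $\sum_\ell\gamma_\ell$ and Corollary~\ref{cor:interp} handles coincident bases for free.
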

\begin{proof}
Suppose that $G$ is a connected $k$-uniform hypergraph.
For any positive integer, $p$,
let $G^1,\ldots,G^p$ be copies of $G$.
Let $\{v_{1}^j,\ldots,v_{n}^j\}$ be the vertices of $G^j$.
Construct $G\uplabel p$ by taking the union of $G^1,\ldots,G^p$
along with $n(k-1)p$ new vertices
and $2 n p$ new edges:
For each $i\in[n]$, $t\in[k-1]$ and $j\in[p]$
we add a vertex $u_{i,t}^j$.
Then we add edges
$(u_{i,1}^j,\ldots,u_{i,k-1}^j,v_i^j)$
and
$(u_{i,1}^j,\ldots,u_{i,k-1}^j,v_i^{(j\bmod n)+1})$.

Now
by Corollary~\ref{cor:cor}, $g$ is $k$-factoring, so
$D_{\ell}\cong A_{\ell}\uplabel k\times [s_{\ell}\uplabel k]$.
By Lemma~\ref{cor:one},
 \begin{equation}
\label{eq:ZGp}
Z^{f\uplabel k}(G\uplabel p)=
\sum_{\ell\in[m]}
\Lambda_{\ell}\uplabel k(G\uplabel p) \,\,Z^{S_\ell\uplabel k}(G\uplabel p).
\end{equation}

We now look at the constituent parts of the right-hand-side of
Equation~(\ref{eq:ZGp}).
First,
$$Z^{S_\ell\uplabel k}(G\uplabel p) =
\sum_{\sigma: V(G\uplabel p)\rightarrow A_\ell\uplabel k}\,
\prod_{(\U_1,\ldots,\U_k)\in E(G\uplabel p)}
S_\ell\uplabel k(\sigma(\U_1),\ldots,\sigma(\U_k)).$$
By Part (ii) of Lemma~\ref{lem:one}, $S\uplabel k_\ell$
is a Latin hypercube. So, given the values
$\sigma(v_1^j),\ldots,\sigma(v_n^j)$,
the values $\sigma(u_{i,1}^j),\ldots,\sigma(u_{i,k-2}^j)$
(for $i\in[n]$)
can be chosen arbitrarily from $A_\ell\uplabel k$.
Then
there is exactly one choice for each $\sigma(u_{i,k-1}^j)$ so
that
$$(\sigma(u_{i,1}^j),\ldots,\sigma(u_{i,k-1}^j),\sigma(v_i^j))\in
S_\ell\uplabel k.$$
Then for $j<n$ to have
$$(\sigma(u_{i,1}^j),\ldots,\sigma(u_{i,k-1}^j),\sigma(v_i^{(j\bmod n)+1}
))\in S_\ell\uplabel k$$
we must have $\sigma(v_i^{j+1})=\sigma(v_i^j)$.
(If $j={n}$ then
$$ (\sigma(u_{i,1}^j),\ldots,\sigma(u_{i,k-1}^j),\sigma(v_i^{(j\bmod n)+1}))
\in S_\ell\uplabel k$$ just ensures $v_i^1=v_i^n$
so it
adds no new constraint.)
Thus,
\begin{align*}
Z^{S_\ell\uplabel k}(G\uplabel p) &=
\sum_{\sigma: V(G^1)\rightarrow A_\ell\uplabel k}
\prod_{(\U_1,\ldots,\U_k)\in E(G^1)}
S_\ell\uplabel k(\sigma(\U_1),\ldots,\sigma(\U_k))
\,\bigl|A_\ell\uplabel k\bigr|^{n(k-2)p}\\
&=
\bigl|A_\ell\uplabel k\bigr|^{n(k-2)p} Z^{S_\ell\uplabel k}(G).
\end{align*}

Also, using $d_\Gamma(\U)$ to denote the degree of vertex~$\U$ in
hypergraph~$\Gamma$,
 \begin{align*}
\Lambda_{\ell}\uplabel k(G\uplabel p)
&=
\prod_{\U\in V(G\uplabel p)}\,
   \sum_{h\in[s_{\ell}\uplabel k]}
   \big(\lambda_{\ell,h}\uplabel k\big)^{d_{G\uplabel p}(\U)}
\\
&=\left(
 \prod_{i\in[n]}\,
   \sum_{h\in[s_{\ell}\uplabel k]}\big(\lambda_{\ell,h}\uplabel
   k\big)^{d_{G}(v_i)+2}
\right)^p
\left(
 \prod_{i\in[n]}\,\prod_{t\in[k-1]}\,
\sum_{h\in[s_{\ell}\uplabel k]}\big(\lambda_{\ell,h}\uplabel k\big)^2
\right)^p,\\
\end{align*}
where the first factor on the right-hand-side is the product
over vertices $v_i^j$ and the
second factor is the product over vertices $u_{i,t}^j$.

So $Z^{f\uplabel k}(G\uplabel p)$ is equal to
$$
\sum_{\ell\in[m]}
\bigg(
 \prod_{i\in[n]}\,
   \sum_{h\in[s_{\ell}\uplabel k]}
   \big(\lambda_{\ell,h}\uplabel k\big)^{d_{G}(v_i)+2}
\bigg)^p
\bigg(
 \prod_{i\in[n]}\,\prod_{t\in[k-1]}\,
\sum_{h\in[s_{\ell}\uplabel k]}\big(\lambda_{\ell,h}\uplabel k\big)^2
\bigg)^p
\bigl|A_\ell\uplabel k\bigr|^{n(k-2)p} Z^{S_\ell\uplabel k}(G).
$$
We can now use Corollary~\ref{cor:interp} with
$Z_p = Z^{f\uplabel k}(G\uplabel p)$,
$\gamma_\ell = Z^{S_\ell\uplabel k}(G)$
and
\[\eta_\ell =
\bigg(
 \prod_{i\in[n]}\,
   \sum_{h\in[s_{\ell}\uplabel k]}\big(\lambda_{\ell,h}\uplabel
   k\big)^{d_{G}(v_i)+2}
\bigg)
\bigg(
 \prod_{i\in[n]}\,\prod_{t\in[k-1]}\,
\sum_{h\in[s_{\ell}\uplabel k]}\big(\lambda_{\ell,h}\uplabel k\big)^2
\bigg)
\bigl|A_\ell\uplabel k\bigr|^{n(k-2)}
.\qedhere\]
\end{proof}

Let us take stock.  Suppose $g$ is not \nump-hard
and that $g$ is $(k-1)$-factoring and $(k-1)$-equational.
We know by Corollary~\ref{cor:cor} that $g$ is $k$-factoring, and
by Part~(ii) of Lemma~\ref{lem:one}
that the
various relations $S_{\ell}\uplabel k$ are Latin hypercubes.
The final step, the subject of the following section,
is to show that the latter have additional structure, namely
that they are defined by equations over an Abelian groups.
It will follow that $g$ is $k$-equational.

\section{Constraint satisfaction and Abelian group equations}
\label{sec:triple}

Let $S$ be an arity-$k$ relation on a ground set $A$.
Recall our earlier discussion, in Section~\ref{sec:intro},
on the
relation between $\eval(S)$ and $\ncsp(S)$.
Every instance $G$ of $\eval(S)$ can be viewed as an instance of $\ncsp(S)$
by taking the vertices as variables and the edges as constraint scopes.
However, we noted that the converse is not true, since
an instance $I$ of $\ncsp(S)$ might not be
a properly-formed instance of $\eval(S)$.
Nevertheless, by copying variables,
we can view an instance~$I$ of $\ncsp(S)$ as being a $k$-uniform
hypergraph~$G$,
together with some binary equality constraints on variables.
For variables $U$ and $W$,
the constraint $=(U,W)$ is satisfied if and only if
$\sigma(U)=\sigma(W)$.
The following lemma shows that, in our setting,  these equality
constraints do not add any real power - they can be implemented by
interpolation.

\begin{lem}
Let $S=S_1\oplus \cdots \oplus S_m$ be a symmetric
$k$-ary relation on a ground set $A$,
such that each $S_\ell$ is a Latin hypercube.
Then $\ncsp(S) \leq \eval(S)$.
\label{lem:anotherboringinterpolation}
\end{lem}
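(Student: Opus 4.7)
The plan is to reduce $\ncsp(S)$ to $\eval(S)$ in two stages: first, rewrite a CSP instance as a $k$-uniform hypergraph together with binary equality constraints; second, implement those equalities using ``star gadgets'' built from $S$ together with a polynomial-time interpolation.

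For the first stage, given a CSP instance $I$, for each constraint with scope $(u_1,\ldots,u_k)$ I introduce $k$ fresh variables $u_1',\ldots,u_k'$, add the hyperedge $(u_1',\ldots,u_k')$ to a hypergraph~$G$, and replace the original constraint by the binary equalities $=(u_i,u_i')$ for $i\in[k]$. Because the new variables are fresh, $G$ is a proper $k$-uniform hypergraph and the combined partition function equals $Z^S(I)$. It therefore suffices to simulate equality constraints using only $S$-scopes.

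For the second stage I use, for each equality $=(U,W)$, $p$ parallel copies of the gadget from Section~\ref{sec:gadget}: each copy introduces fresh variables $z_2,\ldots,z_k$ and adds the hyperedges $(U,z_2,\ldots,z_k)$ and $(W,z_2,\ldots,z_k)$. Summing over the fresh gadget variables multiplies the partition function by $\fstar(\sigma(U),\sigma(W))^p$. Because each $S_\ell$ is a Latin hypercube, a direct calculation gives $\fstar(x,y)=\eta_\ell$ with $\eta_\ell:=|A_\ell|^{k-2}$ when $x=y\in A_\ell$, and $\fstar(x,y)=0$ otherwise. Thus one gadget already selects exactly the assignments that satisfy the equality, but weights each such assignment by a factor depending only on the component containing the common value.

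Letting $I_p$ denote the resulting hypergraph, I would show
\[
Z^S(I_p)=\sum_{\vec n} N(\vec n)\,\nu_{\vec n}^{\,p},\qquad \nu_{\vec n}:=\prod_{\ell=1}^m \eta_\ell^{n_\ell},
\]
where $\vec n=(n_1,\ldots,n_m)$ ranges over tuples with $\sum_\ell n_\ell=q$ (the number of equality constraints) and $N(\vec n)$ counts the valid assignments~$\sigma$ placing exactly $n_\ell$ of the equalities in component~$\ell$. Since $m$ is a constant determined by~$S$, there are at most $\binom{q+m-1}{m-1}=O(q^{m-1})$ distinct tuples $\vec n$, hence at most that many distinct values $\nu_{\vec n}$, and every such value is a positive integer. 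Applying Corollary~\ref{cor:interp} to the polynomially many values $Z^S(I_1),Z^S(I_2),\ldots$ then recovers $Z^S(I)=\sum_{\vec n} N(\vec n)$ in polynomial time. The main obstacle is essentially bookkeeping: pinning down the closed form for $\fstar$ from the Latin-hypercube hypothesis, and verifying that the number of distinct component-profile products $\nu_{\vec n}$ remains polynomial so that the interpolation step runs in polynomial time.
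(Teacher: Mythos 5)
Your proof is correct and follows essentially the same route as the paper's: use the Latin hypercube property to show that the ``two-star'' gadget $\fstar_S(x,y)=\sum_{z_2,\ldots,z_k}S(x,z_2,\ldots,z_k)S(y,z_2,\ldots,z_k)$ vanishes unless $x=y$ and otherwise equals $|A_\ell|^{k-2}$, replicate the gadget $p$ times per equality, and interpolate via Corollary~\ref{cor:interp}. The only real divergence is bookkeeping: you freshen every variable occurrence (creating $kM$ equalities) and therefore track a component-profile vector $\vec n$ with $O(q^{m-1})$ distinct values, whereas the paper first observes that one may take the hypergraph part of the instance to be connected (so equalities connect vertices in a single component), which collapses $\vec n$ to a one-hot vector and yields a sum over the $m$ components only; both variants are polynomial-time, and yours is correct as written.
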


\begin{proof}
For $\ell\in[m]$,
let $A_\ell$ be the ground set of $S_\ell$.

Let $I$ be an instance of $\ncsp(S)$ comprising a connected hypergraph $G$
with vertices $\{v_1,\ldots,v_n\}$
and $\nu$ equality constraints. Note that this is without loss of generality --
an instance $I$
may be represented as a hypergraph $G$ together with equality constraints
in which equality is only applied to variables in the same connected component
of $G$.

For a positive integer~$p$, construct a  hypergraph $G\uplabel p$ by combining
$G$ with
$\nu p (k-1)$ new vertices and $2 \nu p$ new edges:
For $j\in[p]$ and $i\in[\nu]$ add vertices 
$u_{i,1}^j,\ldots,u_{i,k-1}^j$.
If the $i$'th equality constraint is $=(v_s,v_t)$
then add the $2 p$ edges
$(v_s,u_{i,1}^j,\ldots,u_{i,k-1}^j)$
and $(v_t,u_{i,1}^j,\ldots,u_{i,k-1}^j)$ for $j\in[p]$.

Now, suppose we are given the values $\sigma(v_1),\ldots,\sigma(v_n)$ in
$A_\ell$.
By the Latin hypercube property, we can have
$(\sigma(v_s),\sigma(u_{i,1}^j),\ldots,\sigma(u_{i,k-1}^j))\in S$
and
$(\sigma(v_t),\sigma(u_{i,1}^j),\ldots,\sigma(u_{i,k-1}^j))\in S$
only if $\sigma(v_s)=\sigma(v_t)$.
In that case, there are
$|A_\ell|^{k-2}$ choices for
$\sigma(u_{i,1}^j),\ldots,\sigma(u_{i,k-1}^j)$.
So
$$Z^S(G\uplabel p) = \sum_{\ell\in [m]} Z^{S_\ell}(I) |A_\ell|^{(k-2)p}.$$
We can now use Corollary~\ref{cor:interp}.
\end{proof}
The following lemma establishes the algebraic structure of the $S_\ell$, using
a result of Bulatov and Dalmau~\cite{BulDal07}. The proof itself has
similarities to
that of P\'alfy's theorem~\cite{Palfy84} (see, for example, \cite{DenWis02}).
\begin{lem}
Suppose $k\geq 3$.
Let $S=S_1\oplus \cdots \oplus S_m$ be a symmetric $k$-ary
relation on a ground set $A$
such that, for each $\ell\in[m]$,
$S_\ell$ is a  Latin hypercube.
Suppose $\eval(S)$ is not \nump-hard.
Then for each $\ell\in[m]$,
the relation $S_\ell$ is  defined by an equation
over an Abelian
group $\G_\ell=\langle A_{\ell},+\rangle$ as follows:
for some element $a \in A_{\ell}$,
$(\alpha_1, \ldots, \alpha_k) \in S_{\ell}$  if and only if
$\alpha_1 + \cdots + \alpha_k = a$.
\label{lem:three}
\end{lem}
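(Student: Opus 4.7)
The plan is to extract a Mal'tsev polymorphism on $S_\ell$ from the Bulatov--Dalmau theorem~\cite{BulDal07}, and then, in the spirit of P\'alfy's theorem~\cite{Palfy84}, use that polymorphism together with the symmetry and Latin-hypercube structure of $S_\ell$ to produce an Abelian group $(A_\ell,+)$ whose sum-equation defines $S_\ell$.

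First I would reduce to a single component: by Lemma~\ref{lem:anotherboringinterpolation}, $\ncsp(S)\le\eval(S)$, so $\ncsp(S)$ is not \numph, and because $S=S_1\oplus\cdots\oplus S_m$ splits as a direct sum, each individual $\ncsp(S_\ell)$ is also not \numph. Fix such a component $S_\ell$ on ground set $A_\ell$. Next I would invoke~\cite{BulDal07}: the tractability of $\ncsp(S_\ell)$, together with the combinatorial rigidity of a Latin hypercube (every $(k-1)$-tuple extends uniquely), forces $S_\ell$ to be preserved by a Mal'tsev polymorphism, i.e.\ a ternary operation $m\colon A_\ell^3\to A_\ell$ satisfying $m(x,y,y)=m(y,y,x)=x$ whose componentwise extension maps $S_\ell^3$ into $S_\ell$.

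The heart of the argument is then to convert this Mal'tsev polymorphism into an Abelian group. Pick any $e\in A_\ell$ and define $\alpha\oplus\beta := m(\alpha,e,\beta)$; the Mal'tsev identities immediately make $e$ a two-sided identity for $\oplus$, and the standard Mal'tsev-algebra manipulations (mimicking the proof of P\'alfy's theorem, as presented in~\cite{DenWis02}), combined with the symmetry of $S_\ell$ and the unique-extension property of the Latin hypercube, should yield associativity, commutativity and inverses. Letting $a\in A_\ell$ be the unique element with $(e,\ldots,e,a)\in S_\ell$, I would finally verify that $(\alpha_1,\ldots,\alpha_k)\in S_\ell$ iff $\alpha_1\oplus\cdots\oplus\alpha_k=a$: one direction writes an arbitrary tuple of $S_\ell$ as a componentwise Mal'tsev combination of ``single-coordinate-perturbation'' tuples in $S_\ell$, and the other direction uses the Latin-hypercube property to rule out spurious solutions.

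The main obstacle will be this last block. A symmetric Latin hypercube always gives rise to a commutative symmetric quasigroup, but non-associative examples already exist at $k=3$, so the Latin-hypercube structure alone is not enough. The Mal'tsev polymorphism supplied by~\cite{BulDal07} is precisely the extra algebraic ingredient needed to exclude such non-group cases, and the technically delicate part of the proof is translating componentwise Mal'tsev-invariance of $S_\ell$ into the associative law for $\oplus$ and, via that law, into the affine description of~$S_\ell$.
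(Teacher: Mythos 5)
Your outline is the paper's approach at the level of strategy: pass from $\eval$ to $\ncsp$ via Lemma~\ref{lem:anotherboringinterpolation}, obtain a Mal'tsev polymorphism~$\varphi$ from Bulatov--Dalmau, and then build an Abelian group whose sum-equation describes~$S_\ell$. You even define the candidate group operation $\alpha\oplus\beta=\varphi(\alpha,e,\beta)$ exactly as the paper does (there $e=0$). But the phrase ``the standard Mal'tsev-algebra manipulations \dots should yield associativity, commutativity and inverses'' papers over the genuinely nontrivial step. The Mal'tsev identities alone give you only that $e$ is a two-sided unit for~$\oplus$; they give you neither commutativity nor associativity, and a generic Mal'tsev polymorphism of a symmetric relation need not produce an Abelian group this way.

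What actually makes the argument work is that the Latin-hypercube structure forces $\varphi$ to coincide, on~$A_\ell$, with a term of the commutative quasigroup $\langle A_\ell,\cdot\rangle$ obtained by reading off triples $(\alpha,\beta,\gamma,0,\ldots,0)\in S_\ell$ and setting $\gamma=\alpha\cdot\beta$. Concretely, one stacks three padded triples whose componentwise $\varphi$-image must again be a padded triple and then invokes uniqueness in the Latin hypercube to conclude $\varphi(\alpha,\beta,\gamma)=\beta\cdot(\alpha\cdot\gamma)$; symmetry of $\varphi$ in its outer arguments, and hence commutativity of $\oplus$, drops out immediately, and a few more such tables give inverses and associativity. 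This ``pin down $\varphi$ against the relation'' step is the crux of the P\'alfy-style argument and is missing from your plan; without it the group axioms simply do not follow. A second omission is the passage from $k=3$ to general~$k$: the paper handles this by an induction on arity, using the Mal'tsev term to replace $(\alpha_1,\alpha_2,\ldots,\alpha_k)\in S_\ell$ by $(0,\alpha_1+\alpha_2,\alpha_3,\ldots,\alpha_k)\in S_\ell$ and then applying the inductive hypothesis to the slice $S'_\ell=\{\bar\alpha:(0,\bar\alpha)\in S_\ell\}$. Your sketch of this step (``single-coordinate-perturbation tuples'') gestures in the right direction but does not identify the induction that makes it rigorous.
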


\begin{proof}
Suppose $\eval(S)$ is not \nump-hard. Fix $\ell\in [m]$, and fix any
element $a_\ell\in A_\ell$ and denote it by $0$. If
$(\alpha,\beta,\gamma,0,\ldots,0)\in S_\ell$ we will write
$\gamma=\alpha\cdot \beta$. Then we will call $(\alpha,\beta,\gamma)$
a \emph{triple} and denote the set of triples by $T_\ell$. We will
call $(\alpha,\beta,\gamma,0,\ldots,0)\in S_\ell$ the corresponding
\emph{padded triple}. For given $\alpha$ and $\beta$, the existence
and uniqueness of $\gamma$ in a padded triple follows directly from
the fact that $S_\ell$ is a Latin hypercube.
Thus we may regard $\alpha\cdot \beta$ as a binary operation on $A_\ell$,
and hence $\A_\ell=\langle A_\ell,\cdot\rangle$ is an algebra.
By symmetry, the binary operation of $\A_\ell$ is commutative,
and satisfies the identity $\alpha\cdot(\alpha\cdot \beta)=\beta$ for all
$\alpha,\beta\in A_\ell$. However, the operation is not necessarily
associative.

By Lemma~\ref{lem:anotherboringinterpolation},  $\ncsp(S) \leq \eval(S)$, so
$\ncsp(S)$ is not \nump-hard.
Thus, by~\cite{BulDal07}, there is a
\maltsev polymorphism $\varphi(\alpha,\beta,\gamma)$ on $A$ which preserves
$S$.
Recall that a \maltsev operation $\varphi:A^3\to A$ is any function which
satisfies the identities
$\varphi(\alpha,\beta,\beta)=\varphi(\beta,\beta,\alpha)=\alpha$ for all
$\alpha,\beta\in A$. We may use $\varphi$ to calculate, as follows.
Each line of a table is a triple in $T_\ell$,
and the \maltsev polymorphism implies that the bottom line is also a triple in
$T_\ell$, using the fact that $\varphi(0,0,0)=0$ in the padded triples
(which follows from the \maltsev property).
Thus
\[\begin{array}{c@{\hspace{1.25cm}}c@{\hspace{1.25cm}}c}
    \alpha & \gamma & \alpha\cdot \gamma \\
    \beta & \gamma & \beta\cdot \gamma \\
    \gamma & \beta & \beta\cdot \gamma \\ \hline
    \varphi(\alpha,\beta,\gamma) & \beta & \alpha\cdot \gamma
\end{array}\]
and hence $\varphi(\alpha,\beta,\gamma)=\beta\cdot(\alpha\cdot \gamma)$ is a
term of the algebra $\A_\ell$. We have
\[\varphi(\alpha,\beta,\gamma)=\beta\cdot(\alpha\cdot
\gamma)=\beta\cdot(\gamma\cdot \alpha)=\varphi(\gamma,\beta,\alpha),\]
so $\varphi$ is a symmetric \maltsev operation
(in the sense that it is symmetric in the first and third arguments).

Define a new binary operation $+$ on $A_\ell$ by
$\alpha+\beta=\varphi(\alpha,0,\beta)=0\cdot(\alpha\cdot \beta)$. It follows
immediately that $+$ is commutative. Hence
\[ 0+\alpha\ =\ \alpha+0\ =\ 0\cdot(\alpha\cdot 0)\ 
= \varphi(\alpha,0,0) 
=\ \alpha,\]
so $0$ is an identity for $+$. Denote $0\cdot 0$ by $0^2$, and define $-\alpha$
by $\alpha\cdot 0^2$. Then
\[ (-\alpha)+\alpha\ =\ \alpha+(-\alpha)\ =\ 0\cdot(\alpha\cdot(\alpha\cdot
0^2))\ =\ 0\cdot(0^2)\
=\ 0\cdot(0\cdot 0)\ =\ 0,\]
so $-\alpha$ is an inverse for $\alpha$. As usual, we write $\alpha-\beta$ for
$\alpha+(-\beta)$.

We have
\[\begin{array}{c@{\hspace{1.25cm}}c@{\hspace{1.25cm}}c}
    \alpha & 0^2\ & \alpha\cdot 0^2  \\
    0 & 0^2\ & 0  \\
    \beta & \beta\cdot 0 & 0  \\ \hline
    \alpha+\beta& \beta\cdot 0 & \alpha\cdot 0^2 \end{array}\]
so $\alpha+\beta=(\beta\cdot 0)\cdot(\alpha\cdot 0^2)$
and since $+$ is commutative,
$\alpha+\beta=\beta+\alpha = (\alpha\cdot 0)\cdot(\beta \cdot 0^2)$.
Then
\[\begin{array}{c@{\hspace{1.25cm}}c@{\hspace{1.25cm}}c}
\alpha \cdot 0 & \beta \cdot 0^2 & \alpha+\beta \\
0^2 & 0 & 0 \\
\gamma\cdot 0 & 0 & \gamma  \\ \hline
\varphi(\alpha \cdot 0,0^2,\gamma\cdot 0)& \ \beta\cdot 0^2  &
(\alpha+\beta)+\gamma
\end{array}\]
Therefore,
since $\varphi$ is symmetric in its first and third arguments,
\begin{align*}
    (\alpha+\beta)+\gamma\ &=\ \varphi(\alpha \cdot 0,0^2,\gamma\cdot
    0)\cdot(\beta\cdot 0^2) \ =\ \varphi(\gamma \cdot 0,0^2,\alpha\cdot
    0)\cdot(\beta\cdot 0^2)\\ &=\ (\gamma+\beta)+\alpha\ =\
    \alpha+(\gamma+\beta)
    \ =\ \alpha+(\beta+\gamma).
\end{align*}
The operation $+$ is therefore associative, and hence the algebra
$\G_\ell=\langle A_\ell,+,-,0\rangle$ is an Abelian group. Hence,
since 
$- X$ is defined to be $X \cdot 0^2$ and
$\alpha-0^2 = -(-\alpha+0^2)$,
we have, for any $\alpha,\beta\in A_\ell$,
\[\begin{array}{c@{\hspace{1.25cm}}c@{\hspace{1.25cm}}c}
\alpha-0^2 & 0^2 & -\alpha+0^2 \\
0\phantom{.} & 0^2  & \phantom{-}0 \\
0^2 & \beta & -\beta  \\ \hline
\alpha & \beta  & -\alpha-\beta+0^2
\end{array}\]
where we used the fact that, by definition,
$\varphi( x, 0, y) = x + y$.
Thus $\alpha\cdot \beta=-\alpha-\beta+0^2.$, and it follows that
\begin{align}
T_\ell\ &=\ \set{(\alpha,\beta,-\alpha-\beta+0^2)\in
A_\ell^3:\alpha,\beta\in\A_\ell}\notag\\
&=\ \set{(\alpha,\beta,\gamma)\in A_\ell^3:\alpha+\beta+\gamma=0^2\textrm{ in
}\G_\ell}.\label{tripleset}
\end{align}
In particular, $(\alpha,-\alpha,0^2)\in T_\ell$ for all $\alpha\in A_\ell$, and
hence $(0,0,0^2)\in T_\ell$. It  follows further that
\[ \varphi(\alpha,\beta,\gamma)=\beta\cdot(\alpha\cdot \gamma)=
-\beta-(\alpha\cdot
\gamma)+0^2=-\beta-(-\alpha-\gamma+0^2)+0^2=\alpha-\beta+\gamma, \]
so the \maltsev operation is the term $\alpha-\beta+\gamma$ in the Abelian
group $\G_\ell$.

Now assume by induction that the conclusion of the lemma is true for any
$S$ of arity less than~$k$. It is true for arity $3$  by \eqref{tripleset},
since then,
for any $\ell\in[m]$,
$S_\ell=T_\ell$. For larger $k$, suppose
$(\alpha_1,\alpha_2,\ldots,\alpha_k)\in S_\ell$ is arbitrary. Then, using the
\maltsev operation and padding the triples $(\alpha_1,-\alpha_1,0^2)$,
$(0,0,0^2)$, we have
\[\begin{array}{*6{c@{\hspace{0.75cm}}}}
\alpha_1 & \phantom{-}\alpha_2 & \alpha_3 & \alpha_4 & \cdots & \alpha_k \\
\alpha_1 & -\alpha_1 & 0^2 & 0 & \cdots & 0 \\
0 & \phantom{-}0 & 0^2 & 0 & \cdots & 0 \\ \hline
0 & \alpha_1+\alpha_2 & \alpha_3 & \alpha_4 & \cdots & \alpha_k
\end{array}\]
Now the $(k-1)$-ary relation
\[ S'_\ell\ =\ \{(\alpha'_2,\alpha'_3,\ldots,\alpha'_k)\in
A_\ell^{k-1}:(0,\alpha'_2,\alpha'_3,\ldots,\alpha'_k)\in S_\ell\}\]
is symmetric and has the same \maltsev operation as $S_\ell$. Thus we
can define the same Abelian group $\G_\ell$, and by induction we will have
\[\textstyle S'_\ell\ =\ \{(\alpha'_2,\alpha'_3,\ldots,\alpha'_k)\in
A_\ell^{k-1}:\ \sum_{j=2}^{k}\alpha'_j=a'\textrm{\ \,in }\G_\ell\},\]
for some $a'\in A_\ell$. But we have shown that, for all
$(\alpha_1,\alpha_2,\alpha_3,\ldots,\alpha_k)\in S_\ell$, we have
$(\alpha_1+\alpha_2,\alpha_3\ldots,\alpha_k)\in S'_\ell$. Thus, since $\G_\ell$
is an Abelian group,
\[\textstyle S_\ell\ =\ \{(\alpha_1,\alpha_2,\alpha_3,\ldots,\alpha_k)\in
A_\ell^{k}:\ \sum_{j=1}^{k}\alpha_j=a\textrm{\ \,in }\G_\ell\},\]
where $a=a'$, completing the induction and the proof.
\end{proof}

\section{Proof of Theorem~\ref{thm:main}}
\label{sec:final}

\begin{proof}
Let $g:D^r \rightarrow \nnalgs$ be
a symmetric function
with arity $r\geq 3$.
First, suppose that $g$ is $r$-factoring and $r$-equational.
Then applying Lemma~\ref{cor:one} with $k=r$,
we find that, for connected $G$,
\begin{equation}\label{eq:80}
Z^{g}(G)=\sum_{\ell\in[m]}\Lambda_{\ell}\uplabel r (G) \,\,Z^{S_\ell\uplabel
r}(G).
\end{equation}
Now since $g$ is $r$-equational,
$S\uplabel{r}_\ell$ is defined by an equation over an Abelian group
$(A\uplabel{r}_\ell,+)$.
Now, by~\cite[Lemma 13]{KlLaTe06}, $\eval(S_\ell\uplabel r)$ is polynomial time
solvable: The Abelian group is a direct product of cyclic groups of prime
power. For each of these cyclic groups, we just need to count the solutions
to a system of linear equations over the field $\mathbb{Z}_p$ and
this can be done in polynomial time (see \cite{KlLaTe06}).
Thus, $\eval(S_\ell\uplabel r)$ is in \fp. To
show that $\eval(g)$ is in \fp, it remains to show that
$\Lambda_\ell\uplabel r(G)$, as defined in~(\ref{eq:90}),
can be computed in \fp. This is immediate over the number
field $\rats(\theta,\lambda_{\ell,1}\uplabel r,\ldots,\lambda_{\ell,s_\ell}\uplabel r)$. In Section~\ref{sec:modelcomp},
we show that it can even be computed in \fp
over the number field $\rats(\theta)$.

Suppose now that
$\eval(g)$ is not \nump-hard.  Then
by Lemma~\ref{lem:base}, $g$ is
both $2$-factoring and $2$-equational.
Next suppose that, for some $k\in\{3,\ldots,r\}$,
$g$ is $(k-1)$-factoring and $(k-1)$-equational.
Since $\eval(g)$ is not \nump-hard, we know that
$\eval(f\uplabel k)$ is not \nump-hard.
By Corollary~\ref{cor:cor},
$g$ is $k$-factoring.
Suppose, for contradiction, that
$g$ is not $k$-equational.
By Part (ii) of Lemma~\ref{lem:one}, each
$S\uplabel k_\ell$
is a Latin hypercube, so by Lemma~\ref{lem:three}, $\eval(S\uplabel k)$
is \nump-hard. By Lemma~\ref{cor:two},
$\eval(f\uplabel k)$ is \nump-hard, giving the contradiction.
So $g$ is $k$-equational.
By induction, $g$ is $r$-factoring and $r$-equational.

It remains to consider the effectiveness of the dichotomy.
For this, we must show that there is an algorithm that
determines whether $g$ is $r$-factoring and $r$-equational.
This is nearly identical to a proof that the dichotomy
in Theorem~\ref{thm:state} is effective, however the
notation is simpler in the latter context, so we provide this proof next.
\end{proof}

\begin{lem} The dichotomy in Theorem~\ref{thm:state} is effective.
\end{lem}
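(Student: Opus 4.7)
The plan is to exhibit an explicit effective procedure that, given the description of $g$, decides whether the three conditions of Theorem~\ref{thm:state} hold for each component $D_\ell$. First, I would compute the connected components $D_1,\ldots,D_m$ by evaluating $R\uplabel{2}$ directly from $g$ and taking its transitive reflexive closure. Then I would compute, for each component, the $\sim_r$-equivalence classes directly from the definition, which only requires testing finitely many ratios of $g$-values, all of which live in the fixed number field $\rats(\theta)$ (together with possibly a finite extension by the observed ratios), so each comparison is effective.

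Next, for each $\ell$, I would verify the first condition of Theorem~\ref{thm:state} by checking that all $\sim_r$-classes inside $D_\ell$ have a common size $s_\ell$; by the equivalence argument given between Theorems~\ref{thm:state} and~\ref{thm:main}, the sets $A_\ell\uplabel r$ of class representatives and the common class size are the only possible choices for $A_\ell$ and $s_\ell$, so this check is forced. For the second condition, within each class $\eqclass{\alpha}{r}$ the multiset of values $\{{\lambda'}_z\uplabel{r} : z\in \eqclass{\alpha}{r}\}$ is intrinsic (independent of the representative $\alpha$); I would compute these multisets, check that they coincide across all $\alpha\in A_\ell\uplabel r$, use them to fix a bijection $D_\ell\to A_\ell\uplabel r\times[s_\ell]$ and candidate weights $\lambda_{\ell,i}$, and then verify by direct substitution that $g_\ell$ has the asserted product form $\lambda_{\ell,i_1}\cdots\lambda_{\ell,i_r}\,S_\ell(\alpha_1,\ldots,\alpha_r)$.

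Once we have a candidate symmetric relation $S_\ell\subseteq A_\ell^r$, it remains to decide whether there is an Abelian group structure $(A_\ell,+)$ and an element $a\in A_\ell$ such that $S_\ell$ is the solution set of $\alpha_1+\cdots+\alpha_r=a$. Since $|A_\ell|\le|D|$ is a fixed constant, only finitely many binary operations on $A_\ell$ exist; I would enumerate them, retain those satisfying the Abelian group axioms, and for each such group test every $a\in A_\ell$ against $S_\ell$. This step is the only nontrivial one: brute enumeration makes effectiveness evident but is inelegant; a cleaner alternative is to mimic the construction in the proof of Lemma~\ref{lem:three}, picking a fixed $0\in A_\ell$, reading off the ternary relation obtained by padding with $0$'s, defining $\alpha\cdot\beta$ as the unique witness completing a tuple of $S_\ell$, and attempting to derive the Abelian group operation via the formula $\alpha+\beta=0\cdot(\alpha\cdot\beta)$; one then just verifies associativity and that the resulting equation cuts out $S_\ell$. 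Either way, the procedure is a finite check on a bounded domain, so the dichotomy is effective.
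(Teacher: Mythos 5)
Your proposal is correct and takes essentially the same approach as the paper: since $|D|$ is a fixed constant, every check is a finite enumeration over a bounded domain, so the dichotomy is decidable. You additionally leverage the uniqueness of the decomposition $D_\ell\cong A_\ell\times[s_\ell]$ (established in the equivalence argument between Theorems~\ref{thm:state} and~\ref{thm:main}) to avoid enumerating factorisations, which is a small refinement over the paper's phrasing; the paper simply notes there are constantly many candidate factorisations to try. The one point where the paper is more careful than you are is the verification of the second (product-form) condition: the weights $\lambda_{\ell,i}$ that arise are $r$-th roots of $g$-values, hence not generally in $\rats(\theta)$, and the paper sidesteps this by recasting~\eqref{eq:40} as the polynomial identity $g((\alpha_1,i_1),\ldots,(\alpha_r,i_r))^r=\prod_{j=1}^r g((\alpha_1,i_j),\ldots,(\alpha_r,i_j))$, which is decidable directly in $\rats(\theta)$. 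Your route via the ratio multisets $\{\lambda'_z\uplabel{r}\}$ (which do live in $\rats(\theta)$) is also sound, but when you then ``fix candidate weights $\lambda_{\ell,i}$ and verify by direct substitution'' you should say explicitly either that you work in a bounded-degree extension obtained by adjoining the needed $r$-th root, or that the verification can be reduced to checking that $g$ is constant on the support of the restriction to class representatives -- otherwise the step looks as if it silently requires arithmetic outside the ambient number field.
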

\begin{proof}

We must show that there is an algorithm that determines
whether the conditions in Theorem~\ref{thm:state} are satisfied.
The connected components $D_1,\ldots, D_m$ can easily be
determined. Then, for each $\ell\in[m]$, there are a constant number of
possibilities for the decompositions
$D_\ell \cong A_\ell \times [s_\ell]$ ($\ell\in[m]$) which can all be checked, if
necessary. Then, for the third condition, there are only a finite number of possibilities for the group structure, corresponding to the factorisations of $|A_\ell|$. Again, these can all be checked to see if any defines $S_\ell$,
for each $\ell\in[m]$.

For the second condition,  for each $\ell\in[m]$,
we
need to decide the satisfiability of a system of the form
\begin{equation}\label{eq:40}
g( (\alpha_1,i_1),\ldots,(\alpha_r,i_r) ) =
   \lambda_{\ell,i_1}\cdots\lambda_{\ell,i_r}\ \textrm{ for all }\
(\alpha_1,\ldots,\alpha_r)\in S_\ell\ \textrm{ and }\ i_1,\ldots,i_r\in[s_\ell].
\end{equation}
Thus we have
\begin{equation}\label{eq:50}
   \lambda_{\ell,i}\ =\ g( (\alpha_1,i),\ldots,(\alpha_r,i) )^{1/r}\
   \ \textrm{ for all }\
(\alpha_1,\ldots,\alpha_r)\in S_\ell\ \textrm{ and }\ i\in[s_\ell],
\end{equation}
and hence~\eqref{eq:40} is equivalent to the system
\begin{equation*}
g( (\alpha_1,i_1),\ldots,(\alpha_r,i_r) )^r\  =\ \prod_{j=1}^r
g( (\alpha_1,i_j),\ldots,(\alpha_r,i_j) )
\end{equation*}
for all
$(\alpha_1,\ldots,\alpha_r)\in S_\ell$
and
$ i_1,\ldots,i_r\in[s_\ell]$,
which can be decided in constant time by computation in the number field $\rats(\theta)$.
\end{proof}
\section{Computation of $Z^g(G)$ in $\rats(\theta)$}
\label{sec:modelcomp}

Observe that \eqref{eq:90}, \eqref{eq:80} and \eqref{eq:50} seem together to imply that, in the polynomial time computable cases, we must compute $Z^g(G)$ in the number field $\rats(\theta,\lambda_{1,1},\ldots,\lambda_{1,s_1},\ldots,\lambda_{m,1},\ldots,
\lambda_{m,s_m})$, where, for $\ell\in[m]$ and $i\in[s_\ell]$,
$\lambda_{\ell,i}=\lambda_{\ell,i}\uplabel r$ is an  $r^\textrm{th}$ root
of one of the original weights.
This seems anomalous, since $Z^g(G)$ is actually an element of $\rats(\theta)$. We conclude by showing that the computation of $Z^g(G)$ can be done entirely within $\rats(\theta)$, as might be hoped.

To do this, we must expand the expressions

\begin{equation*}
\Lambda_{\ell}\uplabel r(G)=
\prod_{v\in V(G)}\,
   \sum_{i=1}^{s_{\ell}}(\lambda_{\ell,i})^{d_{v}}.
\end{equation*}

To simplify the text, we drop the subscript $\ell$ in the
rest of this section, writing $s$ for $s_\ell$ and $\lambda_i$ for $\lambda_{\ell,i}$ and
$\Lambda\uplabel r$ for $\Lambda_\ell\uplabel r$. Thus, we wish to expand
\begin{equation*}
    \Lambda\uplabel r (G)\ =\ \prod_{v\in V(G)}\Big(\sum_{i=1}^s \lambda_i^{d_v}\Big).
\end{equation*}

The exponents of $\lambda_i$ ($i\in[s]$) in the monomials of the expansion of $\Lambda\uplabel r (G)$ are given by
\begin{equation}\label{eq:60}
    \sum_{v\in V(G)}\delta_{v,i}d_v,\ \ \textrm{where}\ \sum_{i=1}^s \delta_{v,i}=1\ \ \textrm{and}\ \ \delta_{v,i}\in\set{0,1}\ \ (i\in[s],v\in V(G)).
\end{equation}
Recall that~$M$ denotes the number of edges of~$G$. 
Thus there are $O(M^s)$ possible monomials in the $\lambda_i$, and the integer coefficient of each monomial $\prod_{i=1}^s\lambda_i^{M_i}$ are given by computing the number of solutions to systems of equations of the form
\begin{equation}\label{eq:70}
    \sum_{v\in V(G)}\delta_{v,i}d_v\,=\,M_i,\ \ \textrm{where}\ \sum_{i=1}^s \delta_{v,i}=1\ \ \textrm{and}\ \ \delta_{v,i}\in\set{0,1}\ \ (i\in[s],v\in V(G)).
\end{equation}
This can be done for all $0\leq M_i \leq rM$ ($i\in[s]$) in $O(nM^s)$ time by dynamic programming. An easy counting argument shows that $\sum_{v\in V(G)} d_v=rM$, so this returns a nonzero coefficient for the monomial $\prod_{i=1}^s\lambda_i^{M_i}$ only if $\sum_{i=1}^s M_i=rM$. Thus, in fact, there are at most
\begin{equation*}
    \binom{rM+s-1}{s-1}\ =\ O(M^{s-1})
\end{equation*}
such monomials, which is clearly polynomial in the input size.

Thus we can compute in \fp a representation of $\Lambda\uplabel r (G)$
as a multivariate polynomial with monomials $\prod_{i=1}^s\lambda_i^{M_i}$ such that $\sum_{i=1}^s M_i=rM$ and $M_i\geq 0$ ($i\in[s]$). We can express each such monomial in terms of the original weights, as follows. Let $r_{ij}$ ($i\in[s],j\in [M]$) be nonnegative integers such that $\sum_{i=1}^s r_{ij}=r$ ($j\in[M]$) and $\sum_{j=1}^M r_{ij}=M_i$ ($i\in[s]$). Such numbers always exist, though they will usually be far from unique, and can be computed in $O(M)$ time. They are the entries of a \emph{contingency table} with row totals $M_i$  ($i\in[s]$) and column totals $r$ ($j\in [M]$). See, for example,~\cite{DiaGan95}. Now each column $r_{ij}$ ($j\in [M]$)
can be interpreted as an $r$-multiset $\set{i_{1j},\ldots,i_{rj}}\subseteq [s]$, where $i\in[s]$ appears with multiplicity $r_{ij}$. Thus, choosing any $(\alpha_1,\ldots,\alpha_r)\in S$, we have
\begin{equation*}
  \prod_{i=1}^s\lambda_i^{M_i}\ =\ \prod_{j=1}^M \prod_{i=1}^s \lambda_i^{r_{ij}}\ =\ \prod_{j=1}^M \big( \lambda_{i_{1j}}\cdots\lambda_{i_{rj}}\big)\ =\
  \prod_{j=1}^M g( (\alpha_1,i_{1j}),\ldots,(\alpha_r,i_{rj}) ),
\end{equation*}
using \eqref{eq:40}. This can be computed in $O(M)$ time in $\rats(\theta)$, so $Z^g(G)$ can be evaluated in $O(M^s)$ time. The most demanding part of the  computation seems to be the $O(nM^s)$ time needed to determine the relevant monomials by dynamic programming. But clearly all computations can be done in \fp, and by working entirely within $\rats(\theta)$.

\end{document}